\newcommand\OldLong{old-long}
\renewcommand\delta{{\Delta'}}
 \newcounter{theo}
\lstdefinelanguage{sml}{
	        morekeywords={at,of,case,letrec,thread,in,to,fun,proc,where,raise,exception,end,function,let,for,while,do,or,
                and,add,remove,match,if,then,such,that,else,skip,exit,not,inner,loop,return,repeat,until,input,output, try, catch, when, is, While},
	        keywordstyle={\sffamily\bfseries},
	        morekeywords=[2]{Algorithm,Parameter,Input,Output},
	        keywordstyle=[2]{\scshape},
	        morecomment=[l]{//},
	        morecomment=[s]{/*}{*/},
	        breaklines=true
}
\newcommand\QNN[1]{\texttt{QN#1}\xspace}
\newcommand\comments[2]{}
\newcommand\Iovkapc[1]{}
\newcommand\TBD[1]{}
\newcommand\antonio[1]{}
\newcommand\Short[1]{#1}
\newcommand\Long[1]{}
\newcommand\INLINE[1]{}
\newcommand\imagepath{\OldLong /images/}
\begin{document}
\title{Schema-Based Automata Determinization}
\def\titlerunning{Schema-Based Determinization}
\def\authorrunning{Niehren, Sakho, and Al Serhali}

\author{Joachim Niehren 
\institute{ Inria, France \quad Universit\'{e} de Lille} 
\email{joachim.niehren@inria.fr}
\and
Momar Sakho
\institute{ Inria, France \quad  Universit\'{e} de Lille}
\email{momar.sakho@inria.fr}
\and 
Antonio Al Serhali  
\institute{ Inria, France \quad  Universit\'{e} de Lille}
\email{antonio.al-serhali@inria.fr}}

\maketitle
\renewcommand\I[1]{\ensuremath{\mathcal{I}_{#1}}}
\newcommand\nodes{\ensuremath{\mathit{nod}}}
\newcommand\sha{\ensuremath{\mathit{sha}}}

\newcommand\xmlonex{xml\&\onext}
\newcommand\applyelse{\ensuremath{@_{\ELSE}}}
\newcommand\Applyelse{{@_{\_}}}
\newcommand\PARAM[1]{\textsc{[#1]}}
\newcommand\SSHAET{\SHAtext\PARAM{\ELSE:T,\applyelse}\xspace}
\newcommand\dSSHAET{\dSHAtext\PARAM{\ELSE:T,\applyelse}\xspace}
\newcommand\dSSHAETs{\dSHAstext\PARAM{\ELSE:T,\applyelse}\xspace}
\newcommand\dSSHAEIT{\dSHAtext\PARAM{\ELSE:\I{T},\applyelse}\xspace}
\newcommand\dSSHAEITs{\dSHAstext\PARAM{\ELSE:\I{T},\applyelse}\xspace}
\newcommand\SSHAETs{\SHAstext\PARAM{\ELSE:T,\applyelse}\xspace}
\newcommand\SSHAEIT{\SHAtext\PARAM{\ELSE:\I{T},\applyelse}\xspace}
\newcommand\SSHAEITs{\SHAstext\PARAM{\ELSE:\I{T},\applyelse}\xspace}
\newcommand\SSHAE{\SHAtext\PARAM{\ELSE,\applyelse}\xspace}
\newcommand\dSSHAE{\dSHAtext\PARAM{\ELSE,\applyelse}\xspace}
\newcommand\SHAE{\SHAtext\PARAM{\applyelse}\xspace}
\newcommand\dSHAE{\dSHAtext\PARAM{\applyelse}\xspace}
\newcommand\SSHAEs{\SHAtext\PARAM{\ELSE,\applyelse}\xspace}

\newcommand\dSSHA{\dSHAtext\PARAM\ELSE\xspace}
\newcommand\dSSHAs{\dSHAstext\PARAM\ELSE\xspace}
\newcommand\SSHA{\dSHAtext\PARAM\ELSE\xspace}
\newcommand\SSHAs{\dSHAstext\PARAM\ELSE\xspace}
\newcommand\ESHA{\SHAtext\PARAM\ELSE\xspace}
\newcommand\ESHAs{\SHAstext\PARAM\ELSE\xspace}

\newcommand\SDFA{\DFAtext\PARAM\ELSE\xspace}
\newcommand\SNFAs{\NFAtext{s}\PARAM\ELSE\xspace}
\newcommand\SDFAs{\DFAstext\PARAM\ELSE\xspace}
\newcommand\successor{\mathit{next}}
\renewcommand\DD{{:}{:}}
\newcommand\Eval[3]{\generalevalcand{#3}{#2}{#1}}
\newcommand\seval[3]{\Eval{#1}{\{#2\}}{#3}}
\newcommand\TIR{\OP\hspace{-.1em}\CL}
\newcommand\accdet[1]{\states^{\det(#1)}}%
\newcommand\detp{\det'}
\newcommand\accdetp[1]{\states^{\detp(#1)}}%
\providecommand\rightarrowdashed{\relbar\rightarrow}

\newcommand\treetrans[2]{\ensuremath{#1 \rightarrowdashed #2}}
\newcommand\NOIA{\text{ has no internal $a$-rule}}
\newcommand\detS{\DET_S}
\newcommand\detgen[1]{\DET_{#1}}
\newcommand\detSp{\detp_S}
\newcommand\simp{\sim'}
\newcommand\rules{\Delta}
\newcommand\A{A}
\renewcommand\xtree{\mathit{x}\textrm{-}\mathit{tree}}
\newcommand\xdocument{\mathit{x}\textrm{-}\mathit{document}}

\newcommand\Qinith{\Qinit}
\newcommand\Qinitt{\Qinittgen\Delta}
\newcommand\Qinittgen[1]{\TIR^{#1}}
\newcommand\Qinittgendotted[1]{\dot{\TIR}^{#1}}
\newcommand\dottimes{\dot{\times}}
\newcommand\hattimes{\widehat{\times}}

\renewcommand\SAmathLongU{{\ensuremath{\SAmath=(\Sigma,\states,\Delta,\Qinith,\Qfin)}}}

\newcommand\dotteddelta{\dot{\Delta}}
\newcommand\dottedstates{\ensuremath{\mathcal{\dot{Q}}}}
\newcommand\epsclosure{\ensuremath{\epsilon^{\Delta^*}}}
\renewcommand\states{\ensuremath{\mathcal{Q}}}

\newcommand\elser[1]{\ensuremath{\QT_{\mathit{else}}^{#1}}}
\newcommand\QH{\ensuremath{\states_{\mathit{h}}}}
\newcommand\QT{\ensuremath{\states_{\mathit{t}}}}
\newcommand\Qsink{\ensuremath{\states_{\mathit{sink}}}}
\newcommand\sink{\ensuremath{\mathit{sink}}}
\newcommand\projectfirstname{\operatorname{\Pi}}
\newcommand\projectfirst[2]{\ensuremath{\projectfirstname_{#2}(#1)}}
\newcommand\Projectfirst[1]{\ensuremath{\projectfirstname_{#1}}}

\newcommand\prf[1]{\ensuremath{\Pi}_{#1}}

\newcommand\dotprf[1]{\dot{\projectfirstname}_{#1}}

\newcommand\projection{\operatorname{projectF}}
\newcommand\firstprojection[2]{\ensuremath{\projection_{#2}(#1)}}
\newcommand\Firstprojection[1]{\ensuremath{\projection_{#1}}}
\newcommand\projectF[1]{\Firstprojection{#1}}

\newcommand\scL{\scl{S}(A)}

\newcommand\SCL{\hatscl{S}(A)}%\dotprf{A}(A \times S)}
\newcommand\DSCL{\hatscl{S}(det(A))}%\dotprf{A}(A \times S)}

\newcommand\project{\mathit{project}}
\renewcommand\operatorname[1]{\mathit{#1}}
\newcommand\schemacleanname{\operatorname{scl}}
\newcommand\coaccclean{\operatorname{coacl}}
\newcommand\accclean{\operatorname{acl}}
\newcommand\schemaclean[2]{\ensuremath{\schemacleanname_{#2}(#1)}}
\newcommand\schemacleanp[2]{\ensuremath{\schemacleanname'_{#2}(#1)}}
\newcommand\Schemaclean[1]{\ensuremath{\schemacleanname_{#1}}}
\newcommand\dotscl[1]{\ensuremath{\dot\schemacleanname_{#1}}}
\newcommand\hatscl[1]{\ensuremath{\widehat\schemacleanname_{#1}}}
\newcommand\scl[1]{\Schemaclean{#1}}
\newcommand\sclS{\scl{S}} 
\newcommand\projS{\pi_S}
\newcommand\ins[1]{\mathit{\iota}_{#1}}
\newcommand\insS{\ins{S}} 

\newcommand\sclv[1]{\ins{#1}}
\newcommand\sclcS{\insS}
\newcommand\sclvS{\insS} 
\newcommand\insSs[1]{\insS^{#1}} 

\newcommand\Infer[2]{\infer{#2}{#1}}
\newcommand\complete{\mathit{cpl}}
\newcommand\AtimescplS{\mathit{A\times \complete(S)}}

\newcommand\IhstatePic[1]{\xrightarrow{}\hspace{-.4em}\tikz[baseline=(char.base)]{
            \node[shape=circle,draw,inner sep=0,outer sep=0] (char) {\ensuremath{#1}};}}
\newcommand\ItstatePic[1]{\xrightarrow{\langle\rangle}\hspace{-.4em}\tikz[baseline=(char.base)]{
    \node[shape=circle,draw,inner sep=0,outer sep=0] (char) {\ensuremath{#1}};}}

\newcommand\statePic[1]{
\tikz[baseline=(char.base)]{            \node[shape=circle,draw,inner sep=0,outer sep=0] (char) {\ensuremath{#1}};}}

\newcommand\remove[1]{\mathit{rm}_{#1}}
\renewcommand\complement[1]{\overline{#1}}
\newcommand\PAS{A\times S}
\newcommand\WRT{ \ \text{wrt.}\ }
\newcommand\MULTC{\ }

% \ignore
% \makeatletter
% \def\renewtheorem#1{%
%   \expandafter\let\csname#1\endcsname\relax
%   \expandafter\let\csname c@#1\endcsname\relax
%   \gdef\renewtheorem@envname{#1}
%   \renewtheorem@secpar
% }
% \def\renewtheorem@secpar{\@ifnextchar[{\renewtheorem@numberedlike}{\renewtheorem@nonumberedlike}}
% \def\renewtheorem@numberedlike[#1]#2{\newtheorem{\renewtheorem@envname}[#1]{#2}}
% \def\renewtheorem@nonumberedlike#1{  
% \def\renewtheorem@caption{#1}
% \edef\renewtheorem@nowithin{\noexpand\newtheorem{\renewtheorem@envname}{\renewtheorem@caption}}
% \renewtheorem@thirdpar
% }
% \def\renewtheorem@thirdpar{\@ifnextchar[{\renewtheorem@within}{\renewtheorem@nowithin}}
% \def\renewtheorem@within[#1]{\renewtheorem@nowithin[#1]}
% \makeatother

\renewcommand\Figure[1]{Figure~\ref{#1}}
\begin{abstract}
We propose an algorithm for schema-based determinization of finite
automata on words and of stepwise hedge automata on nested words.
The idea is to integrate schema-based cleaning directly into automata
determinization. We prove the
correctness of our new algorithm and show that it is always more
efficient than standard determinization followed by schema-based
cleaning. Our implementation permits to obtain a small deterministic automaton
for an example of an XPath query, where standard determinization yields
a  huge stepwise hedge automaton for which schema-based
cleaning runs out of memory.

\end{abstract}
%\keywords{Automata, trees, nested words,  path queries, \XPath.}

\section{Introduction}

Nested words are words enhanced with well-nested parenthesis. They generalize over
trees, unranked trees, and sequences of thereof that are also called
hedges or forests. Nested words provide a formal way to
represent semi-structured textual documents of XML and JSON format.

Regular queries for nested words can be defined by finite state
automata. We will use stepwise hedge automata (\SHAs) for this purpose \cite{Sakho},
which combine finite state automata for words and trees in a natural manner. \SHAs
refine previous notions of hedge automata from the sixties
\cite{Thatcher67b,TATA07} in order to obtain a
decent notion of left-to-right and bottom-up determinism.
They extend on stepwise tree automata \cite{CarmeNiehrenTommasi04}, so
that they can not only be applied to unranked trees but also to
hedges. Any \SHA defines a forest algebra \cite{DBLP:conf/birthday/BojanczykW08}
based on its transition relation.
Furthermore, \SHAs can always be  determinized and have the
same expressiveness as (deterministic) nested word automaton (\NWA) \cite{DBLP:conf/icalp/Mehlhorn80,BRAUNMUHL19851,Alur07,DBLP:journals/sigact/OkhotinS14}. Note, however, that \SHAs do not provide any form of
top-down determinism in contrast to \NWAs.

Efficient compilers from regular \XPath queries to \SHAs exist \cite{Sakho},
possibly using \NWAs as intermediates \cite{AlurMadhusudan09,MozafariZengZaniolo12,debarbieux15}.
Our main motivation is to determinize the \SHAs of  regular \XPath
queries since deterministic automata are crucial for various
algorithmic querying tasks. In particular, determinism
reduce the complexity of universality or inclusion checking
from EXP-completeness to P-time, both for the classes of
deterministic \SHAs or \NWAs. In turn, universality checking
is relevant for the earliest query answering of XPath queries
on XML streams \cite{GauwinNiehrenTison09b}. Furthermore, determinism
is needed for efficient in-memory answer enumeration of regular queries \cite{schmid_et_al:LIPIcs.ICDT.2021.4}. 

Automata determinization may take exponential time in the worst case, so
it may not always be feasible in practice. For \SHAs compiled from
the \XPath queries of the XPathMark benchmark \cite{FranceschetXPathPT},
however, it was shown to be unproblematic. This changes for the
\XPath benchmark collected by Lick and Schmitz \cite{lickbench}:
for 37\% of its regular \XPath queries, \SHA determinization does
require more than 100 seconds, in which case it produces
huge deterministic automata \cite{alserhalibench}. An example is:
%the following XPath query:
\begin{verbatim}
      (QN7)     /a/b//(* | @* | comment() | text())
\end{verbatim}
This XPath query selects all nodes of an XML document that are descendants of
a $b$-element below an $a$-element  at the root. The nodes may
have any XML type: element, attribute, comment, or
text. The nondeterministic \SHA for \QNN7 has $145$ states and an overall size of $348$. Its determinization however leads to an automaton with $10.005$ states and an overall size of $1.634.122$. 

A kick-off question is how to reduce the size
of deterministic automata. One approach beside of minimization
is to apply schema-based cleaning \cite{Sakho}, where the
schema of a query defines to which nested words the query can be applied.
Schemas are always given by deterministic automata while the automata for
queries may be nondeterministic. The idea of schema-based automaton cleaning is to keep only those states and transition rules of the automaton, that are needed to
recognize some nested word satisfying the schema. The needed states and rules
can be found by building the product of automata for the query and the schema.
For \XPath queries selecting nodes, we have the schema $\onext$ that states
that a single node is selected for a fixed variable $x$ by any answer of the query.
The second schema expresses which nested words satisfy the XML data model.
With the intersection of these two schemas, the schema-based cleaning of the deterministic \SHA for $\QNN7$ indeed has only $74$ states and $203$ rules. When applying \SHA minimization afterwards, the size of the automaton goes down to $27$ states and $71$ transition rules. However, our implementation of schema-based cleaning, runs out of memory for larger automata with say more than $1000$ states.
Therefore, we cannot compute the schema-based cleaning from the
deterministic \SHA obtained from $\QNN7$. Neither can we minimize
it with our implementation of deterministic \SHA minimization.
The question of how to produce small deterministic
automaton for queries as simple as $\QNN7$ 
thus needs a different answer.

Given the relevance of schemas, one naive approach could be to
determinize the product of the automata for the query
and schema. This may look
questionable at first sight, given that the schema-product
may be bigger than the original automaton, so
why could it make determinization more efficient? But
in the case of $\QNN7$, the determinization of the
schema-product yields a deterministic
automata with only $92$ states and $325$ transition rules,
and can be computed efficiently. This observation is very promising,
motivating three general questions:
\begin{enumerate}
\item Why are schemas so important for automata determinization?
\item Can this be established by some complexity result?
\item Is there a way to compute the schema-based cleaning of
  the determinization of an \SHA more efficiently than be schema-based cleaning followed by determinization?
\end{enumerate}
Our main result is a novel algorithm for
schema-based determinization of \NFAs and \SHAs, that integrates
schema-based cleaning directly into the usual
determinization algorithm. This algorithm answers question 3 positively.
Its idea is to keep only  those subsets of states of the automaton
during the determinization, that can be aligned to some
state of the schema. In our Theorem \ref{detS2},  we prove that
schema-based determinization always produces the same
deterministic automaton than schema-free determinization followed 
by schema-based cleaning. By schema-based determinization
we could compute the schema-based
cleaning of the determinization of $\QNN7$ in less than three seconds. In
contrast, the schema-based cleaning of
the determinization does not terminate after a few hours. In the general case, 
the worst case complexity of schema-based determinization
is lower than schema-less determinization followed by
schema-based cleaning.

We also provide a more precise complexity upper bound in
Proposition \ref{prop:sdet-sha}. Given an nondeterministic
\SHA $A$ let $det(A)$ be its determinization, and given a
deterministic \SHA $S$ for the schema, let $A\times S$
the accessible part of the schema-product, and $\sclS(A)$
the schema-based cleaning of $S$ with respect to schema $S$.
We show that the upper bound for the maximal computation time of $\sclS(det(A))$
depends quadratically on the number of states of $S\times det(A)$, which is
often way smaller than for $det(A)$ since $S$ is deterministic.
This complexity result shows why the schema is so relevant
for determinization (questions 1 and 2), and why
computing the schema-based determinization is often more
efficient than determinization followed by schema-based
cleaning (question 3).

To see that $S\times det(A)$ is often way smaller than $det(A)$
for deterministic $S$ we first note that $det(A\times S)=det(A) \times S$ since $S$ is
  deterministic.\footnote{If $\{(q_1,s_1)\ldots (q_n,s_n)\} \in
det(A\times S)$ then there exists a tree that can go into all states
$q_1\ldots q_n$ with $A$ and into all states $s_1,\ldots s_n $ with $S$.
Since $S$ is deterministic, we have $s_1=\ldots s_n$. So there exists a 
tree going into $\{q_1,\ldots,q_n\}$ with $det(A)$ and also into all
$s_i$. So $(\{q_1,\ldots,q_n\},s_i)$ is a state of $det(A)\times S$.}
So for the many states $Q=\{q_1\ldots q_n\}$ of $det(A)$ there
may not exist any state $s$ of $S$ such that $(Q,s) \in det(A)\times
S$, because this requires all states $q_i$ can be aligned to $s$,
i.e. that $(q_i,s)$ in $A\times S$ for all $1\le i\le n$.
Furthermore, $\det(A)\times S$ is equal to $\detS(A)\times S$, so that
$det(A\times S)=\detS(A)\times S$. Hence any size bound for
the schema-based determinization $\detS(A)$ implies a size bound
for the determinization of the schema-product. Also,
in our experiments $\detS(A)\times S$ is almost by a factor of $2$
bigger than $\detS(A)$.  So the size of the determinization of the
schema-product is closely tied to the size of the
schema-based determinization. 

We also present a experimental evaluation of our implementation of
schema-based determinization of \SHAs. We consider a
scalable family of \SHAs obtained from a scalable family of
\XPath queries. Our experiments confirm the very large
improvement implied by the usage of schemas for determinization. For this,
we implemented the algorithm for schema-based \SHA
determinization in Scala. Furthermore, we applied the
XSLT compiler from regular forward \XPath queries to
\SHAs from \cite{Sakho}, as well as the datalog
implementations of \SHA minimization and
schema-based cleaning from there.

A large scale experiment on practical \XPath queries was
provided in follow-up work \cite{alserhalibench} where schema-based
algorithms were applied to the regular \XPath queries
collected by Lick and Schmitz \cite{lickbench} from real word XQuery
and XSLT programs. Small deterministic \SHAs could be obtained by
schema-based determinization for all
regular \XPath queries in this corpus. In contrast, standard
determinization in 37\% of the cases fails with a timeout of
100 seconds. Without this timeout, determinization
either runs out of memory or produces very large automata.

\medskip
\paragraph{Outline}
We start with related work on automata for nested words,
determinization for \XPath queries (Section \ref{sec:rel}).
In Section \ref{sec:words}, we recall the definition \NFAs and discuss how to use  them as schemas and queries on words. In Section
  \ref{sec:scl-words}, we recall schema-based cleaning for \NFAs.
  In Section \ref{sec:schema-det-words}, we contribute
  our schema-based determinization algorithm in the case of \NFAs and show
  its correctness. In Section \ref{sec:nw}, we recall the notion of \SHAs
  for defining languages of nested words. In  Section
  \ref{sec:schema-det}, we lift schema-based determinization to
  \SHAs. \Short{Full proofs can be found in the Appendix of the long version
  \cite{niehrenschemadet}.} \Long{Full proofs can be found in the Appendix.}
\section{Related Work}
\label{sec:rel}

%While we will develop our results for finite state automata on
%words first, before lifting them to automata on nested words,
We focus on automata for nested words, even
though our results are new for \NFAs too.

\paragraph{Nested word automata}
As recalled in the survey of Okhotin and
Salomaa \cite{DBLP:journals/sigact/OkhotinS14}, Alur's et al. \cite{Alur07}
\NWAs were first introduced in the eighties under the name of input driven automata by Mehlhorn \cite{DBLP:conf/icalp/Mehlhorn80}, and then reinvented
several times under different names. In particular, they
were called visibly pushdown automata \cite{AlurMadhusudan04}, pushdown forest automata \cite{NeumannSeidl98}, and streaming tree automata \cite{GauwinNiehrenRoos08}.
The determinization algorithm for \NWAs was first
invented in the eighties by von Braunmühl and Verbeek
in the journal version of \cite{BRAUNMUHL19851} and then rediscovered
various times later on too.

\paragraph{Determinization algorithms}
The usual determinization algorithms for \NFAs relies on
the well-known subset construction. The determinization
algorithms of bottom-up tree automata and \SHAs are
straightforward extensions thereof.
The determinization algorithm for \NWAs, in contrast,
is more complicated, since having to deal with pushdowns.
Subsets of \emph{pairs} of states are to be considered there
and not only subsets of states as with the usual automata
determinization algorithm. We also notice that general
pushdown automata with nonvisible stacks can even not always be determinized.

\paragraph{Application to  \XPath}
Debarbieux et al. \cite{debarbieux15}  noticed that the
determinization algorithm for \NWAs often behaves badly when applied
to \NWAs obtained from XPath queries as simple as $//a/b$. Niehren and 
Sakho \cite{Sakho} observed more recently
that the situation is different for the  determinization of \SHAs: It works out nicely for the \SHA of $//a/b$ and also for all other \SHAs obtained by compilation from forward
navigational XPath queries in the XPathMark benchmark
\cite{FranceschetXPathPT}. 
Even more surprisingly, the same good
behavior could be observed for
the determinization algorithm of \NWA when restricted to
\NWAs with the weak-single entry property.

\paragraph{Weak single-entry \NWAs versus \SHAs}
The weak-single entry property
implies that an \NFA cannot memoize anything in its
state when moving top-down. So it can only pass information
left-to-right and and bottom-up, similarly to an \SHA.
This property
failed for the \NWAs considered by Debarbieux et al. and 
the
determinization of their \NWAs thus required top-down determinization.
This quickly led to the size explosion described above.
One the other hand side, the weak single-entry
property can always
be established in quadratic time by compiling \NWAs to \SHAs
forth and back. Or else, one can avoid top-down determinization all
over by directly working with \SHAs as we do here.

\ignore{\paragraph{Forest Algebras}
Nested words can be seen as forests, i.e., sequences of
unranked trees. There may be minor differences in the treatment
of labels, which appear in tree constructors, as labels of
parenthesis, or of internal letters of the nested words.
Up to encodings that remove such differences, any \SHA defines a forest algebra
based on its transition relation \cite{DBLP:conf/birthday/BojanczykW08}.}

\ignore{\paragraph{Nested regular path queries} 
\cite{LibkinMartensVrgoc13,martens_et_al:LIPIcs:2018:8594} are
formulas from propositional dynamic logic (PDL)
\cite{DBLP:journals/jcss/FischerL79}. While applicable
to general data graphs, they can also be restricted to
nested words. They extend on usual regular expression by adding 
nested filters that are closed under the logical operators. 
Filters may test for the existence of nodes answering a
nested regular path query. When XML documents,
nested regular path queries can be identified with
regular forward XPath queries
\cite{arenas2011querying,GottlobKochPichler03c}.
t is folklore that nested regular path queries on data trees can be
compiled to automata.

\paragraph{MSO}
For ranked trees, one can first compile path queries to 
the monadic second-order (MSO) formulas, and from there
to tree automata that recognize so-called V-structures
\cite{ThatcherWright68,GottlobKoch02,CarmeNiehrenTommasi04}. 
V-structures are trees that are annotated with variables $x$
satisfying
$\onex$, saying
where the any variable assignement has to assign a single
node to variable $x$. In recent database
terminology, languages of V-structures are called document 
spanners
\cite{DBLP:journals/jacm/FaginKRV15,schmid_et_al:LIPIcs.ICDT.2021.4}. 
But since the compilation of path queries to MSO formulas eagerly introduces 
quantifier alternations, that are to be eliminated 
by repeated automata determinization, 
this approach leads to a large size blowup. 

\paragraph{Nested Regular Expressions}
A more recent idea \cite{Sakho} is
to compile nested regular path queries to nested regular
expressions in a first step, i.e., regular expressions for
nested words that were introduced earlier under the name
\emph{regular expression types} by Hosoya and Pierce \cite{DBLP:journals/toit/HosoyaP03}. Nested regular expressions support  the
usual operators of regular expressions and the nesting
expressions $\OP e \CL$ for defining languages
of nested words. Furthermore, they support
vertically recursive definitions based
on $\mu$-expressions $\mu x.e$,
intersections $e\cap e'$,
and complementation $\overline{e}$. It is then
possible to compile nested regular expressions to \SHAs or \NWAs,
by lifting the usual automata constructions from standard regular to
nested regular expressions. }

\section{Finite Automata on Words, Schemas, and Queries}
\label{sec:words}
In this section, we discuss hwo to use \NFAs 
for defining schemas and queries on words.

Let $\mathbb{N}$ be the set of natural numbers including $0$.
The set of words over a finite alphabet $\Sigma$
is $\Sigma^*=\bigcup\limits_{n \in \mathbb{N}}^\infty \Sigma^n$.
A word $(a_1,\ldots, a_n)\in\Sigma^n$ is written
as $a_1\ldots a_n$. We denote by $\epsilon$ the empty word,
i.e., the unique element of $\Sigma^0$ and by $w_1\cdot w_2\in \Sigma^*$
the concatenation of two words $w_1,w_2\in \Sigma^*$.
For example, if $\Sigma=\{a,b\}$ then $aa\cdot bb=aabb=a\cdot a\cdot b \cdot b$.

\begin{definition}
A \NFA  is a tuple $\SAmath=(\Sigma,\states,\Delta,\Qinith,\Qfin)$ such that
$\states$ is a finite set of states, the alphabet $\Sigma$ is a finite set,
$\Qinith, \Qfin \subseteq \states$ are subsets of initial and final
states, and $\Delta \subseteq \states \times \Sigma\times \states$ is
the set of transition rules.
\end{definition}
The size of a \NFA is $|A|=|\states|+|\Delta|$. 
A transition rule $(q,a,q')\in\Delta$ is 
denoted by $q \xrightarrow{a} q'\in\Delta$. We define transitions $q \xrightarrow{w} q'\wrt\Delta$ for arbitrary words $w\in\Sigma^*$ by the following inference rules:
$$
\infer{q\xrightarrow{\epsilon}q\wrt \Delta}
      {q\in \states}
      \qquad
\infer{q\xrightarrow{a}q'\wrt \Delta}
      {q\xrightarrow{a} q'\in \Delta}
      \qquad
\infer{q_0\xrightarrow{w_1\cdot w_2}q_2 \wrt \Delta}
      {q_0\xrightarrow{w_1}q_1 \wrt \Delta\qquad
       q_1\xrightarrow{w_2}q_2 \wrt \Delta }
$$
      The language of words recognized by a \NFA then is
      $\L(A)=\{w \in\Sigma^*\mid  q\xrightarrow{w} q' \wrt \Delta,\ q\in \Qinit,\ q'\in \Qfin\}$.

\begin{figure}[t]
$$
  \infer{
  \Qinith^A\in \Qinit^{det(A)} \qquad \Qinith^A\in \accdet\A 
  }
  {
   \Qinith^A\not=\emptyset
  }
  \qquad
\infer[]
         {Q\in \FIN^{\det(A)}}
         {Q\in \states^{\det(A)} \quad Q \cap \FIN^{A}  \not=\emptyset }
$$
$$
   \infer{
  \SAir{Q}{a}{Q'} \in \Delta^{det(A)} \qquad Q' \in \accdet\A
  }
  {
\begin{array}{c}
  Q \in \accdet\A
\qquad
  Q'=\{q' \in \states^A \mid \SAir{q}{a}{q'} \in\Delta^A,\ q \in Q \} \not=\emptyset 
  \end{array}
  }
$$
$$
  \det(A)=(\Sigma,\states^{\det(A)},\Delta^{\det(A)}, \Qinit^{\det(A)},\Qfin^{\det(A)})
$$
  \caption{\label{acc-det} \label{accdet} The accessible determinization
    $\det(A)$ of \NFA $A$.
    }
\end{figure}

\begin{figure}[t]
   \begin{sml}
fun det($A$) =
  let $Store=hashset.new(\emptyset)$ and $Agenda=list.new()$ and $Rules=hashset.new(\emptyset)$
  if $\init^A\not=\emptyset$ then $Agenda.add(\init^A)$   
  while $Agenda.notEmpty()$ do 
    let $Q = Agenda.pop()$
    let $h$ be an empty hash table with keys from $\Sigma$.
       // the values will be nonempty hash subsets of $\states^A$
    for $q\xrightarrow{a} q' \in \Delta^A$ such that $q\in Q$ do
        if $h.get(a)=undef$ then $h.add(a,hashset.new(\emptyset))$
        $(h.get(a)).add(q')$ 
    for $(a,Q')$ in $h.tolist()$ do $Rules.add(\SAir{Q}{a}{Q'})$
        if not $Store.member(Q')$  then $Store.add(Q')$ $Agenda.push(Q')$
  let $\init^{det(A)}=\{Q \mid Q \in Store, Q \cap \init^A \not= \emptyset\}$ and $F^{det(A)}=\{Q \mid Q \in Store, Q \cap \Qfin^A \not= \emptyset\}$
  return $(\Sigma,Store.toSet(),Rules.toSet(),\init^{det(A)},F^{det(A)})$
\end{sml}
  \caption{\label{algo-accdet-words}  A program computing the accessible determinization of an \NFA $A$ from \Figure{acc-det}.}
 \end{figure}

A \NFA $A$ is called \emph{deterministic} or equivalently a \DFA, if it has
at most one initial state, and for every pair $(q,a) \in \states \times
\Sigma$ there is at most one state $q'\in\states^A$ such that $q \xrightarrow{a} q' \in \Delta^A$.
Any \NFA $A$ can be converted into a \DFA  that
recognizes the same language by the usual subset
construction. The accessible determinization $\det(A)$ of $A=(\Sigma,\states^A,\Delta^A,\Qinit^A,\Qfin^A)$ is defined by the inference rules in \Figure{acc-det}. It works like the usual subset construction, except that only accessible subsets are created. It is well known that $\L(A)=\L(\det(A))$. Since only accessible subsets of states are added, we
have $\states^{\det(A)}\subseteq 2^{\states^A}$. Therefore,
the accessible determinization may even reduce the size
of the automaton and often avoid the 
exponential worst case where $\states^{\det(A)}=2^{\states^A}$.%\antonioinline{ In the rest of the paper, some of the propositions are well known and not considered as original but rather folklore in the literature, we will mark such parts by the latter but they will be essential for the proofs of the complexity propositions and main theorems.}

%\begin{propositionrep}\label{comp-detacc-words}
\begin{proposition}[Folklore]\label{comp-detacc-words}
  The accessible determinization $\det(A)$ of a \NFA $A$ can be computed
  in expected amortized time
  $O(|\states^{\det(A)}|\MULTC| \Delta^A| + |A|).$
\end{proposition}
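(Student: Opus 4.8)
The plan is to analyze the imperative program $\det$ of \Figure{algo-accdet-words}, in the standard model where a hash table supports membership tests and insertions in expected amortized $O(1)$ time per \emph{elementary} key (a letter or a state), and in time proportional to $|K|$ when the key $K$ is a \emph{set} of states that must be hashed or compared elementwise. That this program computes the automaton $\det(A)$ specified by the inference rules of \Figure{acc-det} is a routine induction on derivations, which I would record as a lemma without dwelling on it; the real content of the proposition is the time bound.

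The backbone is a loop invariant: \texttt{Store} collects exactly the accessible subsets discovered so far, \texttt{Agenda} holds those not yet popped, and --- thanks to the guard \texttt{not Store.member}$(Q')$ --- no subset is ever pushed twice. Hence the body of the \texttt{while} loop runs exactly once for each $Q\in\states^{\det(A)}$, i.e.\ $|\states^{\det(A)}|$ times, since $\states^{\det(A)}$ is precisely the set of nonempty subsets reachable from the initial subset $\init^A$ by the transition rule of \Figure{acc-det}. I would then bound the cost of one iteration for a fixed $Q$. The inner loop over $\{q\xrightarrow{a}q'\in\Delta^A\mid q\in Q\}$ is realized by a single pass over $\Delta^A$, testing $q\in Q$ in expected $O(1)$ (each subset being itself stored as a hash set of states), so it performs $O(|\Delta^A|)$ steps of expected $O(1)$ work on the per-iteration table \texttt{h}. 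Writing $R_Q$ for the set of $\Delta^A$-rules with source in $Q$, the subsets $Q'_a$ accumulated in \texttt{h} satisfy $\sum_a |Q'_a|\le |R_Q|\le |\Delta^A|$; therefore the second loop over \texttt{h.tolist()} also costs $O(|\Delta^A|)$ in total, including each call \texttt{Store.member}$(Q'_a)$ and \texttt{Store.add}$(Q'_a)$, which hashes a key of size $|Q'_a|$. So one iteration costs expected amortized $O(|\Delta^A|)$, and the whole \texttt{while} loop costs $O(|\states^{\det(A)}|\cdot|\Delta^A|)$.

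The remaining $O(|A|)$ covers reading $A$ and allocating the hash structures; inserting the very first subset $\init^A$, of size at most $|\states^A|\le|A|$; and computing $\Qinit^{\det(A)}$ and $\Qfin^{\det(A)}$. For the last point I would avoid a fresh $|\states^A|$ factor per subset by attaching to every subset, when it is first built, two Boolean flags recording whether it meets $\init^A$ and whether it meets $\Qfin^A$: each state inserted into a subset is tested for membership in $\init^A$ and in $\Qfin^A$ in expected $O(1)$, which is absorbed into the insertion cost already counted --- the only exception being $\init^A$ itself, contributing $O(|\states^A|)$. Summing the three bounds gives the claimed expected amortized $O(|\states^{\det(A)}|\cdot|\Delta^A|+|A|)$.

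The one step I expect to be a genuine pitfall is the treatment of the set-valued keys $Q,Q'$: done naively, hashing or comparing them could smuggle a spurious $|\states^A|$ factor onto the number of transition rules of $\det(A)$ and break the bound. The fix, which I would make explicit, is twofold --- charge the one-off cost $O(|Q'|)$ of hashing a freshly discovered subset to the iteration that discovers it, where it is dominated by $|R_Q|\le|\Delta^A|$ for the parent $Q$; and intern subsets to integer identifiers so that \texttt{Rules} is populated with triples of ids and \texttt{Rules.add} stays expected $O(1)$. Once this bookkeeping is pinned down, the whole argument reduces to the clean count ``one agenda entry per accessible subset, one scan of $\Delta^A$ per entry, plus $O(|A|)$ of setup''.
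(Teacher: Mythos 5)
Your proposal is correct and follows essentially the same route as the paper: it analyzes the very program of \Figure{algo-accdet-words}, with dynamic perfect hashing giving expected amortized $O(1)$ set operations, one agenda entry per accessible subset, one $O(|\Delta^A|)$ scan per entry, and $O(|A|)$ preprocessing. Your explicit treatment of the set-valued keys (charging the $O(|Q'|)$ hashing cost to the discovering iteration and interning subsets as integer ids) is a point the paper's sketch glosses over, and is a welcome refinement rather than a different argument.
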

%\end{propositionrep}
%\begin{appendixproof}
\begin{proofsketch}
The algorithm for accessible determinization with this complexity is somehow folklore. We
 sketch it nevertheless, since we need to refined it  for
   schema-based determinization later on.
A set of inference rules for accessible determinization is given in
\Figure{acc-det}, and an algorithm computing the fixed point of these
inference rules is presented in \Figure{algo-accdet-words}. It uses dynamic perfect
   hashing \cite{DBLP:journals/siamcomp/DietzfelbingerKMHRT94} for
   implementing hash sets, so that set inserting and membership can be done in
   randomized amortized time $O(1)$. The algorithm has a hash set $Store$
   to save all discovered states $\states^{det(A)}$ and a hash set $Rules$
   to collect all transition rules. Furthermore, it has a stack
   $Agenda$ to process all new states $Q \in
   \states^{det(A)}$. \ignore{For each $Q$ popped from
   the stack $Agenda$, the algorithm uses a hash table $h$ to compute
   all pairs $(a,Q')$ such that $Q\xrightarrow{a} Q'\in\Delta^{\det(A)}$
   and $Q'\not=\emptyset$. This is done by iterating of $\Delta^A$
   so in time $O(|\Delta^A|)$.
   By iterating over the hash table $h$, all transitions $\SAir{Q}{a}{Q'}$
   will be added to the set $Rules$ and $Q'$ will be added to the
   stack $Agenda$ and to the hash set $Store$ if it wasn't there yet. The
   overall number of elements in the $Agenda$ is
   $|\states^{det(A)}|$. For each $Q$, the computation of all $Q'$ is
   in time $O(|\Delta^A|)$. The preprocessing of $A$ requires time
   $O(|A|)$. Thus, the total time of the algorithm is in
   $O(|\states^{\det(A)}|\MULTC| \Delta^A| + |A|)$.}
\end{proofsketch}
%\end{appendixproof} 
\begin{figure}[t]
\begin{minipage}{\textwidth}  
\begin{minipage}{.49\textwidth}  
\centering
\includegraphics[scale=.29]{\imagepath p4a/scp4a.png}
\caption{\label{scp4a}The \NFA $A_0$ for the regular expression $(x + \epsilon).(x.a)^*$}
\end{minipage}
\quad
\begin{minipage}{.49\textwidth}  
\includegraphics[scale=.29]{\imagepath p4a/detscp4a.png}
\caption{\label{detscp4a}The accessible determinization $\det(\NFAEX)$ up to the renaming
  of states $[\{2,4\}/0,$$\{2,3\}/1,$$\{2\}/2,\{3\}/3 ]$.}
\end{minipage}
\end{minipage}
\end{figure}
As a running example, we consider the \NFA $A_0$ for the regular expression $(x + \epsilon).(x.a)^*$ that is drawn as a labeled digraph in \Figure{scp4a}: the nodes of the graph are the states and the labeled edges represent the transitions rules. The initial states are indicated by an ingoing arrow and the final state are doubly circled.
The graph of the \DFA $\det(A_0)$ obtained by accessible determinization
is shown in \Figure {detscp4a}. It is given up to a renaming of the
states that is given in the caption. Note that only $4$ out of the $2^3=8$
subsets are accessible, so the size increases only by a single state and two
transitions rules in this example.

%\section{Schemas and Queries}
\label{sec:schema-queries}
A \emph{regular schema} over $\Sigma$ is
  a \DFA with the alphabet $\Sigma$. We next show how to
  use automata to define 
\\
\noindent  
\begin{minipage}{\textwidth}
\begin{minipage}{.5\textwidth}
regular queries
  on words. For this, any
word is seen   as a labeled digraph. The labeled   digraph of   the word $aab$, for instance, is drawn to the right. The

  \end{minipage}
\begin{minipage}{.5\textwidth}
\centering \includegraphics[scale=.4]{\imagepath aab.png}
\end{minipage}
\end{minipage}
%We next recall the relational structure of a word $w\in\Sigma^*$.
%The domain of this structure
\noindent   set of nodes of    the graph  is the set
of positions of the word $\POS(w)=\{0,\ldots,n\}$ where $n$
is the length of $w$. Position $0$ is labeled by $start$, while all
other positions are labeled by a single letter in $\Sigma$.
%Beside
%of the node labels and the start position, the relational structure
%also defines the $next$ relation on the positions of a word. 
A monadic query function on words with alphabet $\Sigma$ is a
total function $\QUERY$ that
maps some words $w\in \Sigma^*$ to a subset of position $\QUERY(w)\subseteq \POS(w)$.
We say that a position $\pi\in\POS (w)$ is selected by $\QUERY$ if $w\in\dom(\QUERY)$
and $\pi\in \QUERY(w)$. 

Let us fix a single variable $x$. Given a position $\pi$ of a word
$w\in\Sigma^*$ let $w*[\pi/x]$ be the word obtained from $w$ by
inserting $x$ after position $\pi$. We note that all words
of the form $w*[\pi/x]$ contain a single occurrence of $x$. 
Such words are also called $V$-structures where $V=\{x\}$
(see e.g \cite{straubing1994finite}).

\begin{figure}[t]
  \noindent
  \begin{minipage}{.5\textwidth}
\centering
\includegraphics[scale=.40]{\imagepath p4a/scdetscp4a.png}
\caption{\label{scldetA0}The schema-based cleaning of $\det(\NFAEX)$ with schema $\onex$.}
  \end{minipage}
  \qquad
  \begin{minipage}{.4\textwidth}
\centering
\includegraphics[scale=.30]{\imagepath words-one-x/det-stepwise-manual.png}
\caption{\label{DFAONEX}Schema $\onex$ with alphabet $\Sigma\uplus\{x\}$.}
   \end{minipage}
\end{figure}

The set of all $V$-structures can be defined by
the schema \onex over $\Sigma\uplus\{x\}$ in \Figure{DFAONEX}.
It is natural to identify any total monadic query function $\QUERY$
with the language of $V$-structures
$L_\QUERY=\{w*[\pi/x]\mid w\in\Sigma^*, \pi\in \QUERY(w)\}$.
This view permits us to define a subclass of total monadic query functions by automata.
A \emph{(monadic) query automaton over $\Sigma$} is a \NFA $A$ with alphabet
$\Sigma\uplus\{x\}$. It defines the unique
total monadic query function $\QUERY$ such that $L_{\QUERY}=\L(A)\cap \L(\onex)$.
A position $\pi$ of a word $w\in\Sigma^*$ is thus selected by the query $\QUERY$ on $w$
if and only if the $V$-structure $w*[\pi/x]$ is recognized by $A$, i.e.:
$$
   \pi \in \QUERY(w) \Leftrightarrow w*[\pi/x]\in \L(A)
$$
A query function is called regular if it can be defined by some \NFA.
It is well-known from the work of Büchi in the sixties \cite{Buchi60}
that the same class of regular query functions can be defined
equivalently by monadic second-order logic.

We note that only the words satisfying the schema $\onex$ (the $V$-structures) are
relevant for the query function $\QUERY$ of a query automaton $A$.
The query automaton \NFAEX in \Figure{scp4a} for instance,
defines the query function that selects the start position of the
words $\epsilon$ and $a$ and no other positions elsewhere. This is
since the subset of $V$-structures recognized by \NFAEX
is $x + x.a$. Note that the words $\epsilon$ and $xxa$
do also belong to $\L(\NFAEX)$, but are not $V$-structures,
and thus are irrelevant for the query function $\QUERY$.
\section{Schema-Based Cleaning}\label{sec:scl-words}
Schema-based cleaning was introduced only recently \cite{Sakho} in order to reduce the size of automata on nested words.
The idea is to remove all rules and states
from an automaton that are not used to recognize any word satisfying
the schema. Schema-based cleaning can be based on the accessible states of
the product
of the automaton with the schema. While this product may be larger than the
automaton, the schema-based cleaning will always be smaller.

For illustration, the schema-based cleaning of
\NFA $\det(\NFAEX)$ in \Figure{detscp4a} with respect
to schema $\onex$ is given in \Figure{scldetA0}.
The only words recognized by both $\det(\NFAEX)$
and $\onex$ are $x$ and $xa$. For recognizing these
two words, the automaton $\det(\NFAEX)$ does not need states $2$ and $3$, so
they can be removed with all their transitions rules.
Thereby, the word $xxa$ violating the schema is no more
recognized after schema-based cleaning, while it was
recognized by $\det(\NFAEX)$. 
Furthermore, note 
that the state $0$ needs no more to be final after schema-based cleaning.
Therefore the word $\epsilon$, which is recognized by the
automaton but not by the schema, is no more recognized after
schema-based cleaning. So schema-based
cleaning may change the language of the automaton
but only outside of the schema.

Interestingly, the $\NFA$ \NFAEX in \Figure{scp4a} is schema-clean
for schema $\onex$ too, even though it is not perfect, in that it recognizes the words
$\epsilon$ and $xxa$ which are rejected by the schema.
The reason is that for recognizing the words $x$ and $xa$,
which both satisfy the schema, all 3 states and
all 4 transition rules of \NFAEX are needed.
In contrast, we already
noticed that the accessible determinization $\det(\NFAEX)$
in \Figure{detscp4a} is not
schema-clean for schema $\onex$. This illustrates that accessible
determinization does not always preserve schema-cleanliness.
In other words, schema-based cleaning may have a stronger cleaning effect
after determinization than before.

\begin{figure}[t]
$$
\begin{array}{c}
  \infer[]
  {
  %(q,s) \in \Qinit_h^{A\times S} \quad 
  (q,s) \in \Qinith^{A\times S}
\qquad
  (q,s) \in \states^{A\times S}
  }
  {
  q \in \Qinith^A & s \in \Qinith^S
  }
   \qquad
  \infer[]
  {
  (q,s) \in \Qfin^{A\times S}
  }
  {
  q \in \Qfin^A & s \in \Qfin^S &  (q,s) \in \states^{A\times S}
  }
  \\
  \\
  \infer[]{
  \SAir{(q_1,s_1)}{a}{(q_2,s_2)} \in \Delta^{A\times S} \quad (q_2,s_2) \in \states^{A\times S}
  }
  {
   \SAir{q_1}{a}{q_2} \in \Delta^A & \SAir{s_1}{a}{s_2} \in \Delta^S \quad (q_1,s_1) \in \states^{A\times S}
  }

\end{array}
$$
\caption{\label{fig:product_sha_words}\label{accessible-product}
  Accessible product
  $A\times S=(\Sigma,\states^{A\times S},
\Qinith^{A\times S},  \Qfin^{A\times S},\Delta^{A\times S})
  $.}
\end{figure}

The schema-based cleaning of an automaton can be
defined based on the accessible product of
the automaton with the schema. The accessible product $A\times S$ 
of two \NFAs $A$ and $S$ with alphabet 
$\Sigma$ is defined in \Figure{accessible-product}. This
is the usual product, except that only accessible states
are admitted. Clearly, $\L(A\times S)=\L(A)\cap \L(S)$.
Let $\prf{A}(A \times S)$ be obtained from the accessible
product by projecting away the second component\Long{, as
formally defined in \Figure{prf-words} of the appendix}\INLINE{, as
formally defined in \Figure{prf-words}}. The schema-based cleaning of $A$ with
respect to schema $S$ is this projection.
\begin{definition}\label{sclpro-words}$\scl{S}(A) = \prf{A}(A \times S)$. \end{definition}
\begin{toappendix}
\begin{figure}[t]
$$
\begin{array}{c} 
  \infer[]
  {
  	q \in \Qinit^{\prf{A}(A \times S)}
  }
  {
       (q,s) \in \Qinit^{A\times S}
  }
  \qquad  
  \infer[]
  {
  	q \in \states^{\prf{A}(A \times S)} 
  }
  {
       (q,s) \in \states^{A\times S}
  }
\qquad
	\infer[]
         {q \in \Qfin^{\prf{A}(A\times S)}}
           {(q,s) \in \Qfin^{A\times S}}
\qquad
  \infer[]{
   \SAir{q_1}{a}{q_2} \in \Delta^{\prf{A}(A \times S)} 
  }
  {
  \SAir{(q_1,s_1)}{a}{(q_2,s_2)} \in \Delta^{A \times S} %%\quad (q_2,s_2) \in \states^{A \times S}
  }
\end{array}
$$
\caption{\label{prf-words}Projection  $
\prf{A}(A\times S) = (\Sigma, \states^{\prf{A}(A\times S)},\Delta^{\prf{A}(A\times S)},\Qinit^{\prf{A}(A\times S)},\Qfin^{\prf{A}(A\times S)})
$.}
\end{figure}
\end{toappendix}

The fact that $A\times S$ is restricted to accessible states
matches our intuition that all states of $\scl S (A)$ can be used
to read some word in $\L(A)$ that satisfies schema $S$. This can
be proven formally under the condition that all states of $A\times S$
are also co-accessible.
Clearly, $\scl{S}(A)$ is obtained from $A$ by
removing states, initial states, final
states, and transitions rules. So it 
is smaller or equal in size $|\scl S(A)|\le |A|$ and language
$\L(\scl{S}(A))\subseteq \L(A)$. Still,
schema-based cleaning preserves the language within the schema. 
\begin{proposition}[\cite{Sakho}]
$\L(A)\cap\L(S)=\L(\scl{S}(A)) \cap \L(S)$.
\end{proposition}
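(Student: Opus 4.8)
The plan is to prove the two inclusions separately. The inclusion $\L(\scl{S}(A)) \cap \L(S) \subseteq \L(A)\cap\L(S)$ is the easy one: as already noted just above the statement, $\scl{S}(A)=\prf{A}(A\times S)$ is obtained from $A$ by deleting only states, initial states, final states and transition rules, so $\L(\scl{S}(A))\subseteq\L(A)$; intersecting both sides with $\L(S)$ then gives this inclusion for free.

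For the converse inclusion $\L(A)\cap\L(S)\subseteq\L(\scl{S}(A))\cap\L(S)$, I would first record an auxiliary fact about the first-component projection of \Figure{prf-words}: for any \NFA $B$ whose state set consists of pairs, $\L(B)\subseteq\L(\prf{A}(B))$. I would prove this by induction on the derivation of a transition $q_1\xrightarrow{w}q_2$ (with $q_1,q_2$ pairs) using the three inference rules that define the relation $\xrightarrow{w}$: the $\epsilon$-case holds because $\prf{A}(B)$ keeps a projected state whenever $B$ has it (state rule of \Figure{prf-words}); the single-letter case is exactly the transition-projection rule of \Figure{prf-words}; and the concatenation case is immediate from the induction hypotheses. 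Since $\prf{A}$ also maps initial states to initial states and final states to final states, every accepting run of $B$ on $w$ projects to an accepting run of $\prf{A}(B)$ on $w$, which yields $\L(B)\subseteq\L(\prf{A}(B))$.

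With this in hand, take $w\in\L(A)\cap\L(S)$. By the identity $\L(A\times S)=\L(A)\cap\L(S)$ recorded after \Figure{accessible-product} (equivalently, zip an accepting run of $A$ and an accepting run of $S$ over $w$ into a run of $A\times S$, each visited pair being accessible by induction along the run), we obtain $w\in\L(A\times S)$. Applying the auxiliary fact to $B=A\times S$ gives $w\in\L(\prf{A}(A\times S))=\L(\scl{S}(A))$, and since $w\in\L(S)$ as well, we conclude $w\in\L(\scl{S}(A))\cap\L(S)$, which finishes the proof. I do not anticipate a genuine obstacle here: the only step requiring an argument is the auxiliary lemma $\L(B)\subseteq\L(\prf{A}(B))$, a routine induction over the derivation of the transition relation; the one point to watch is that the accessibility restriction built into $A\times S$ costs nothing, precisely because $\L(A\times S)=\L(A)\cap\L(S)$ already accounts for it — every state on an accepting run is accessible, so restricting to accessible states removes no word.
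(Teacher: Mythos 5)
Your proof is correct, and it follows exactly the route the paper itself suggests: the paper states this proposition without proof (citing it from prior work), but the two surrounding remarks --- that $\L(A\times S)=\L(A)\cap\L(S)$ and that $\scl{S}(A)$ arises from $A$ by deleting states and rules, whence $\L(\scl{S}(A))\subseteq\L(A)$ --- are precisely the two ingredients you combine, together with the routine projection lemma $\L(A\times S)\subseteq\L(\prf{A}(A\times S))$ proved by induction on the derivation of the transition relation. Your observation that accessibility costs nothing because every state on an accepting run is accessible is the right point to make explicit.
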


\begin{figure}[t]
  \noindent
  \begin{minipage}{.3\textwidth}
\centering
\includegraphics[scale=.25]{\imagepath scdetscReg2.png}
\caption{\label{scdetscReg2}A \DFA that is schema-clean 
  but not perfect for $\onex$.}
  \end{minipage}
  \quad
  \begin{minipage}{.3\textwidth}
\centering
\includegraphics[scale=.25]{\imagepath prodReg2OneX/orig/stepwise-manual.png}
    \caption{\label{scdetscReg2b}The accessible product
      with \onex is schema-clean 
  and perfect for $\onex$.}
   \end{minipage}
     \quad
   \begin{minipage}{.3\textwidth}
   \centering
\includegraphics[scale=.33]{\imagepath nwords-one-x/det-stepwise.png}
\vspace{-1cm}\caption{\label{one-x-step}
  The \dSHA $\onexnw$ with alphabet $\Sigma\uplus\{x,\neg x\}$.
}
   \end{minipage}
\end{figure}
Schema-clean deterministic automata may still not be perfect,
in that they may recognize some words outside the schema.
This happens for \DFAs if some state of is reached, both, by a word
satisfying the schema and another word that does not satisfy
the  schema.
An example for a \DFA that is schema-clean but not perfect for \onex is given in \Figure{scdetscReg2}. It is not perfect since it accepts the non $V$-structure
  $xaxa$. The problem is that state $1$ can be reached by the words $a$ and $xa$,
  so one cannot infer from being in state $1$ whether some $x$ was read or not.
  If one wants to avoid this, one can use
  the accessible product of the \DFA with the schema instead. In the
  example, this yields the \DFA in \Figure{scdetscReg2b} that is schema-clean and
  perfect for \onex.

%\begin{propositionrep}\label{compl-scl-words}
\begin{proposition}[Folklore]\label{compl-scl-words}
  For any two \DFAs $A$ and $S$  with alphabet $\Sigma$ the accessible product $A\times S$ 
  can be computed in expected amortized time $O(|\states^{A\times S}| |\Sigma|  + |A| + |S| )$.
  \end{proposition}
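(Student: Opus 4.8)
The plan is to adapt the accessible-determinization procedure of Figure~\ref{algo-accdet-words} (used in the proof of Proposition~\ref{comp-detacc-words}) to a product construction, exploiting that both $A$ and $S$ are deterministic so that every pair of states has at most one successor per letter. First I would preprocess $A$ into a hash table $\delta^A$ with keys in $\states^A\times\Sigma$ whose value at $(q,a)$ is the unique $q'$ with $\SAir{q}{a}{q'}\in\Delta^A$ when such a $q'$ exists and is undefined otherwise; this table is built by a single pass over $\Delta^A$, hence in expected amortized time $O(|A|)$ using dynamic perfect hashing \cite{DBLP:journals/siamcomp/DietzfelbingerKMHRT94}. Symmetrically I would build $\delta^S$ in expected amortized time $O(|S|)$.

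Next I would run a worklist traversal of the accessible states of $A\times S$, mirroring the inference rules of Figure~\ref{accessible-product}: maintain a hash set $Store$ of discovered pair-states, a stack $Agenda$, and a hash set $Rules$. If $\Qinith^A\neq\emptyset$ and $\Qinith^S\neq\emptyset$, push the pair $(q,s)$ of their unique elements onto $Agenda$ and add it to $Store$. While $Agenda$ is nonempty, pop $(q,s)$; for every $a\in\Sigma$, look up $q'=\delta^A(q,a)$ and $s'=\delta^S(s,a)$, and if both are defined, add $\SAir{(q,s)}{a}{(q',s')}$ to $Rules$ and, if $(q',s')\notin Store$, add it to $Store$ and push it onto $Agenda$. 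Finally set $\Qinith^{A\times S}$ to the initial pair that was pushed (or $\emptyset$), and $\Qfin^{A\times S}=\{(q,s)\in Store\mid q\in\Qfin^A,\ s\in\Qfin^S\}$, the latter computed by one pass over $Store$ using $O(1)$ membership tests in $\Qfin^A$ and $\Qfin^S$.

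For correctness I would show, by induction on derivation height, that $(q,s)$ (resp.\ a transition rule) ends up in $Store$ (resp.\ in $Rules$) if and only if it is derivable by the rules of Figure~\ref{accessible-product}; the base case is the initial pair, and the induction step is precisely the third rule, since determinism guarantees that the successor read off from $\delta^A$ and $\delta^S$ is the only candidate. For the complexity, preprocessing costs $O(|A|+|S|)$; each pair-state is popped from $Agenda$ exactly once and its processing performs $|\Sigma|$ iterations, each doing a bounded number of expected-$O(1)$ hash operations, so the main loop takes expected amortized time $O(|\states^{A\times S}|\,|\Sigma|)$, and the closing pass over $Store$ is $O(|\states^{A\times S}|)$. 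Summing yields the claimed bound $O(|\states^{A\times S}|\,|\Sigma| + |A| + |S|)$.

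The main obstacle is not any deep argument but pinning down the cost model: to reach the stated bound one iterates over the alphabet $\Sigma$ for each pair-state and reads off its unique successor in expected constant time from the precomputed deterministic transition tables $\delta^A,\delta^S$, rather than iterating over $\Delta^A$ per state as the determinization algorithm of Figure~\ref{algo-accdet-words} does. The remainder is the routine bookkeeping ensuring that no pair-state and no rule is handled twice, together with the amortized-$O(1)$ guarantees of dynamic perfect hashing that justify the word ``expected'' in the statement.
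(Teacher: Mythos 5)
Your proposal is correct and follows essentially the same route as the paper's proof: a worklist traversal of the accessible pair-states driven by the rules of Figure~\ref{accessible-product}, with $Store$, $Agenda$, and $Rules$ maintained by dynamic perfect hashing, iterating over $\Sigma$ for each popped pair and exploiting determinism of $A$ and $S$ so that each $(q,s,a)$ yields at most one successor pair in expected amortized $O(1)$ time after an $O(|A|+|S|)$ preprocessing of the transition tables. The only differences are cosmetic: you read the unique successor off a precomputed partial function $\delta^A,\delta^S$ where the paper computes singleton-or-empty sets $Q$ and $R$ and then invokes determinism, and you spell out the correctness induction that the paper leaves implicit.
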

%\end{propositionrep}

\begin{proof}
An algorithm to compute the fixed points of the
inference rules for the accessible product $A\times S$
in \Figure{accessible-product} can be organized such that
only accessible states are considered (similarly to
semi-naive datalog evaluation). This
algorithm is presented in \Figure{algo-int-rules-words}.
  It dynamically generates the set of rules $Rules$ by using perfect dynamic hashing \cite{DBLP:journals/siamcomp/DietzfelbingerKMHRT94}.
Testing set membership is in time $O(1)$ and the addition
of elements to the set is in expected amortized time $O(1)$.
The algorithm uses a stack, $Agenda$, to memoize all new pairs $(q_1,s_1)\in\states^{A\times S}$ that need to be processed, and a hash set $Store$  that saves all processed states $\states^{A \times S}$. We aim not to push the same pair more than once in the $Agenda$. For this, membership to the $Store$ is checked before an element is pushed to the $Agenda$. For each pair
popped from the stack $Agenda$, the algorithm does the following:
for each letter $a\in\Sigma$
it computes the sets $Q= \{q_2 \mid q_1 \xrightarrow a q_2 \in \Delta^A\}$
and $R= \{s_2 \mid s_1\xrightarrow a s_2\in\Delta^S\}$
and then adds the subset of states of $Q\times R$ that were not
stored in the hash set $Store$ to the agenda. Since $A$ and $S$ are deterministic, there is
at most one such pair, so the time for treating one pair on the
agenda is in expected amortized time $O(|\Sigma|)$.
The overall number of elements in the agenda
will be $|\states^{A\times S}|$.  Note that $Q$ and $R$ can be
computed in $O(1)$ after preprocessing $A$ and $S$ in time $O(|A|+|S|)$. Therefore, we will have a total time of the algorithm in $O(|\states^{A\times S}| |\Sigma|  + |A| + |S| )$.
\end{proof}
\begin{corollary}\label{scl-words-cor}
  For any two \DFAs $A$ and $S$  with alphabet $\Sigma$  schema-based cleaning $\sclS(A)$
  can be computed in expected amortized time $O(|\states^{A\times S}| |\Sigma|  + |A| + |S| )$.
\end{corollary}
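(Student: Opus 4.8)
The plan is to reduce the corollary to the two ingredients already at hand: the accessible product $A\times S$ and the projection onto its first component. By Definition~\ref{sclpro-words} we have $\sclS(A)=\prf{A}(A\times S)$, so computing $\sclS(A)$ splits into (i) building the accessible product $A\times S$ and (ii) projecting away its second component. Step~(i) is exactly Proposition~\ref{compl-scl-words}: since $A$ and $S$ are \DFAs, $A\times S$ is computable in expected amortized time $O(|\states^{A\times S}|\,|\Sigma| + |A| + |S|)$, which is already the bound claimed in the corollary. It therefore only remains to check that step~(ii) stays within this bound.

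For step~(ii), I would first note that $A\times S$ is itself deterministic, being the product of two \DFAs, so each state has at most $|\Sigma|$ outgoing rules and $|\Delta^{A\times S}|\le |\states^{A\times S}|\,|\Sigma|$; hence $|A\times S| = |\states^{A\times S}| + |\Delta^{A\times S}| = O(|\states^{A\times S}|\,|\Sigma|)$. The projection $\prf{A}(A\times S)$ keeps exactly the first component of every state, initial state, final state, and transition rule of $A\times S$, and adds nothing else. So a single pass over those four sets, inserting each projected component into a hash set implemented with dynamic perfect hashing (membership and insertion in $O(1)$ expected amortized time), yields $\prf{A}(A\times S)$ in time $O(|A\times S|) = O(|\states^{A\times S}|\,|\Sigma|)$. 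Adding the costs of the two steps gives the claimed bound.

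There is essentially no obstacle; the one point worth a word of caution is that the projection may identify two states $(q,s_1)$ and $(q,s_2)$ of $A\times S$, so $\sclS(A)$ is in general nondeterministic even when $A$ and $S$ are deterministic. This is harmless for the running time: it only means that several rules of $A\times S$ may collapse to one rule of the projection, which the hash-set deduplication handles transparently, and the number of rules of $\prf{A}(A\times S)$ is still at most $|\Delta^{A\times S}|$.
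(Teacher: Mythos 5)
Your proposal is correct and follows essentially the same route as the paper: reduce to Proposition~\ref{compl-scl-words} for the accessible product and observe that the projection is computable in time linear in $|A\times S|$, which is dominated by the product's computation time. The only cosmetic difference is that you bound $|A\times S|$ explicitly via determinism of the product, whereas the paper simply notes that the product's size cannot exceed the time needed to compute it; both yield the stated bound.
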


\begin{proof}
  By Definition \ref{sclpro-words} it is sufficient to compute the projection of the accessible
  product $A \times S$. By  Proposition \ref{compl-scl-words}
  the product can be computed in time $O(|\states^{A\times S}| |\Sigma|
  + |A| + |S| )$. Its size cannot be larger than its computation
  time. The projection
  can be computed in linear time in the size of $A\times S$, so
  the overall time is in $O(|\states^{A\times S}| |\Sigma|
  + |A| + |S| )$ too.
\end{proof}

\begin{figure}[t]
\begin{sml}
fun $A \times S$ =
  let $Store=hashset.new(\emptyset)$ and $Agenda=list.new()$ and $Rules=hashset.new(\emptyset)$
  if $\init^A=\{q_0\}$ and  $\init^S=\{s_0\}$ then $Agenda.add((q_0,s_0))$  
  while $Agenda.notEmpty()$ do 
    let $(q_1,s_1)= Agenda.pop()$
    for $a \in \Sigma$ do
      let $Q=\{ q_2 \mid \SAir{q_1}{a}{q_2} \in \Delta^A\}$    $R=\{ s_2 \mid \SAir{s_1}{a}{s_2} \in \Delta^S\}$
      for $q_2 \in Q$ and $s_2 \in R$ do
    	  $Rules.add(\SAir{(q_1,s_1)}{a}{(q_2,s_2)})$
    	  if not $Store.member((q_2,s_2))$
    	    then $Store.add((q_2,s_2))$ $Agenda.push((q_2,s_2))$
  let $\init^{A\times S}=\{(q_0,s_0) \mid (q_0,s_0) \in Store \}$ and $F^{A \times S}=\{(q,s) \mid (q,s) \in Store, q\in F^A, s \in F^S\}$
  return $(\Sigma,Store.toSet(),Rules.toSet(),\init^{A\times S},F^{A \times S})$
\end{sml}
\caption{\label{algo-int-rules-words} An algorithm computing the accessible product of \DFAs $A$ and $S$.}
\end{figure}

\section{Schema-Based Determinization}
\label{sec:schema-det-words}

Schema-based cleaning after determinization becomes impossible
in practice if the automaton obtained by determinization is too
big. We therefore show next how to integrate schema-based cleaning
into automata determinization directly.

The schema-based determinization of $A$ with respect to schema $S$
extends on accessible determinization $\det(A)$. The idea is to run
the schema $S$ in parallel with $\det(A)$, in order to keep only those state 
$Q\in\states^{\det(A)}$  that can be aligned to some
state $s\in\states^S$. In this case we write
$Q\sim s$.

\begin{figure}[t]
$$
\infer[]
{
Q \in \Qinith^{\detS(A)}  \quad Q\sim s
}
{
 Q \in \Qinith^{det(A)} & \Qinith^S=\{s\}
}
\qquad 
\infer[]
{
Q \in \states^{\detS(A)}
}
{
Q\sim s
}
\qquad
\infer[]
         {Q\in \FIN^{\detS(A)}}
         {Q\in \FIN^{\det(A)}\qquad  s\in \FIN^S \qquad   Q\sim s } 
$$
$$
\infer[]
   {
Q\xrightarrow{a} Q' \in\Delta^{\detS(A)} \qquad Q'\sim s'
    }
  {
\begin{array}{c}
 \SAir{Q}{a}{Q'} \in \Delta^{det(A)} \qquad Q\sim s \qquad \SAir{s}{a}{s'} \in \rules^S 
\end{array}
    }
$$
\caption{\label{scd-words}Schema-based determ. $  \detS(A)=
  (\Sigma,\states^{\detS(A)}, \Delta^{\detS(A)}, \Qinith^{\detS(A)}, \Qfin^{\detS(A)})
$.}
\end{figure}

The schema-determinization $\detS(A)$ is defined
in \Figure{scd-words}. The automaton  $\detS(A)$ permits to go from 
any subset $Q\in\states^{\det(A)}$ and letter $a\in \Sigma$ to the set of
states $Q'=a^{\Delta^{\det(A)}}(Q)$, under the condition
that there exists schema states $s,s'\in\states^S$ such that 
$Q\sim s$ and $s\xrightarrow{a} s'$. In this case $Q'\sim s'$
is inferred.

\newcommand\QN[1]{\text{QN#1}\xspace}

\newcounter{theoremproof}
\newtheoremrep{keytheoremproof}[theoremproof]{Claim}
\newcommand\reftheoremproof[1]{\ref{thproof-#1}\xspace}

\newcounter{detS}
\newtheoremrep{theo}{Theorem}
\newcommand\refinsS[1]{\ref{detS#1}\xspace}

\begin{theo}[Correctness]\label{detS1} 
$\detS(A) = \schemaclean{\det(A)}{S}$ for any \NFA $A$ and \DFA $S$
with the same alphabet. 
\end{theo}

The theorem states that schema-based determinization yields the
same result as accessible determinization followed by schema-based cleaning.

%\input{schema-clean-proof-words}
%\section{Proof of Correctness Theorem \ref{main-words}}
\label{sec:schema-clean-proof-words}
For the correctness proof %of Correctness Theorem \ref{main-words}
we collapse the two systems of inference rules for 
accessible products and projection 
into a single rule system. This
yields the rule systems for schema-based
cleaning in \Figure{prf2-words}. 
%
%
%\begin{toappendix}
\begin{figure}[t]
$$
\begin{array}{c}
  \infer[]
  {
  	q \in \Qinith^{\SCL} \quad 	(q,s) \in \states^{A\hattimes S}
  }
  {
	q \in \Qinith^A \quad s \in   \Qinith^S
  }
  \qquad
   \infer[]
  {
  q \in  \Qfin^{\SCL}
  }
  {
  q \in \Qfin^A & s \in \Qfin^S & (q,s) \in \states^{A\hattimes S}
  }
  \\[.2em]
  \infer
  {
  	q \in \states^{\SCL}
  }
  {
  (q,s) \in \states^{A\hattimes S}
  }
  \qquad
  \infer[]{
   \SAir{q_1}{a}{q_2} \in \Delta^{\SCL} \quad (q_2,s_2) \in \states^{A\hattimes S}
  }
  {
  \SAir{q_1}{a}{q_2} \in \Delta^{A} \quad \SAir{s_1}{a}{s_2} \in \Delta^{S} \quad (q_1,s_1) \in \states^{A\hattimes S}
  }
\end{array}
$$
$$
\SCL=(\Sigma,\states^{\SCL},\Delta^{\SCL},\Qinit^{\SCL},\Qfin^{\SCL})
$$
\caption{\label{prf2-words}A collapsed rule systems for schema-based cleaning $\SCL$. }
\end{figure}
%\end{toappendix}
%\begin{toappendix}
\begin{figure}[t]
  \begin{sml}
fun detS($A$,$S$) =
  let $Store=hashset.new(\emptyset)$ and $Agenda=list.new()$ and $Rules=hashset.new(\emptyset)$
  if $\init^A\not=\emptyset$ and  $\init^S=\{s_0\}$ then $Agenda.add(\init^A\sim s_0)$   
  while $Agenda.notEmpty()$ do 
    let $(Q_1\sim s_1) = Agenda.pop()$
    for $a \in \Sigma$ do
      let $P=\{ Q_2 \mid \SAir{Q_1}{a}{Q_2} \in \Delta^{\det(A)}\}$ and $R=\{ s_2 \mid \SAir{s_1}{a}{s_2} \in \Delta^S\}$
      for $Q_2 \in P$ and $s_2 \in R$ do $Rules.add(\SAir{Q_1}{a}{Q_2})$
         if not $Store.member(Q_2\sim s_2)$ 
          then $Store.add(Q_2\sim s_2)$    $Agenda.push(Q_2\sim s_2)$
  let $\init^{\detS(A)}=\{Q \mid Q \sim s \in Store, Q \cap \init^A\not=\emptyset\}$ and $F^{\detS(A)}=\{Q \mid Q \sim s \in Store, Q \cap F^A \not= \emptyset\}$
  return $(\Sigma,Store.toSet(),Rules.toSet(),\init^{\detS(A)},F^{\detS(A)})$
\end{sml}
  \caption{\label{algo-schema-det-words} An algorithm for
    schema-based determinization $\detS(A)$  of an \NFA $A$ and a \DFA schema $S$.}
 \end{figure}
%\end{toappendix}
%
The rules there define the automaton $\SCL$, that we annotate 
with a hat, in order to distinguish it from the previous
automaton $\scl{S}(A)$. The rules also infer 
judgements $(q,s)\in\states^{A \hattimes S}$ that we distinguish
by a hat from the previous judgments 
$(q,s)\in\states^{A\times S}$ of the accessible product.
The next proposition shows that the system of collapsed
inference rules indeed redefines the schema-based cleaning.

\begin{propositionrep}\label{collapse-words}
For any two \NFAs $A$ and $S$ with the same alphabet:
$$\scl{S}(A)=\SCL
\quad\text{\ and\ }\quad  \states^{A\times S}=\states^{A\hattimes S}$$
\end{propositionrep}

  \begin{toappendix}
\begin{proof}
The two equations are shown by the following four lemmas. The
judgements with a hat there are to be inferred by the collapsed 
system of inference rules in \Figure{prf2-words}, while the 
other judgments are to be inferred with the 
rule system for accessible products in \Figure{fig:product_sha_words}.

\end{proof}

\begin{lemmarep}\label{collapselemma0}
$q \in \Qinith^{\SCL}$ iff $q \in \Qinith^{\scL}$.
\end{lemmarep}

\begin{proof}
The rule systems of accessible product, projection, and the collapsed
system can be used as following :
$$
  \infer[]
  {
  	q \in \Qinith^{\SCL} 
  }
  {
	q \in \Qinith^A \quad s \in   \Qinith^S
  }
  \quad
  \Infer
  {
    q \in \Qinith^A & s \in \Qinith^S
  }
  {
    \Infer{(q,s) \in \Qinith^{A\times S}}{q \in \Qinith^{ \scL} }
  }
  $$
\end{proof}

\begin{lemmarep}\label{collapselemma1}
$(q,s) \in \states^{A\hattimes S}$ iff $(q,s) \in \states^{A \times S}$.
\end{lemmarep}

\begin{proof}
We proof for all $n\ge 0$ that if $(q,s) \in \states^{A\hattimes S}$ 
has a proof tree of size $n$ then there exists a proof tree
for $(q,s) \in \states^{A \times S}$. The proof is by induction
on $n$.

In the case of the rules of the initial states, $(q,s) \in \states^{A\hattimes S}$ is inferred directly whenever  $(q,s) \in \states^{A \times S}$ and vice versa, using the following:
$$
 \infer[]
  {
  %(q,s) \in \Qinit_h^{A\times S} \quad 
  (q,s) \in \Qinith^{A\times S}
\qquad
  (q,s) \in \states^{A\times S}
  }
  {
  q \in \Qinith^A & s \in \Qinith^S
  }
  \qquad
 \infer[]
  {
  	q \in \Qinith^{\SCL} \quad 	(q,s) \in \states^{A\hattimes S}
  }
  {
	q \in \Qinith^A \quad s \in   \Qinith^S
  } 
$$

If $(q,s) \in \states^{A\hattimes S}$ is inferred by the
internal rule of the collapsed rule system in \Figure{prf2-words}. 
Then the proof tree has the following form
for some proof tree $T_1$: 
$$
 \Infer
 {
  \SAir{q_1}{a}{q_2} \in \Delta^{A} \quad \SAir{s_1}{a}{s_2} \in \Delta^{S} \quad 
  \Infer{
  		T_1
 	 }
  	{
  		(q_1,s_1) \in \states^{A\hattimes S}
 	 }
  }
  {
  (q_2,s_2) \in \states^{A\hattimes S}
  }
$$
This shows that there is a smaller proof tree $T_1$ for inferring $(q_1,s_1) \in
\states^{A\hattimes S}$. So by induction hypothesis applied to $T_1$, there 
exists a proof tree $T'_1$ for inferring $(q_1,s_1) \in
\states^{A\times S}$ with the proof system of accessible
products in \Figure{fig:product_sha_words}:
$$
 \Infer{
  		T'_1
 	 }
  	{
  		(q_1,s_1) \in \states^{A\times S}
 	 }
$$
Therefore, we also have the following proof tree for $ (q_2,s_2) \in
\states^{A\times S}$ with the internal rule for the accessible product:
$$
 \Infer
 {
  \SAir{q_1}{a}{q_2} \in \Delta^{A} \quad \SAir{s_1}{a}{s_2} \in
  \Delta^{S} \quad   	 \Infer{
  		T'_1
 	 }
  	{
  		(q_1,s_1) \in \states^{A\times S}
 	 }
  }
  {  
    (q_2,s_2) \in \states^{A\times S}  
  }
$$

For the inverse direction, if $(q,s) \in \states^{A\times S}$ is inferred by the
internal rule of the accessible product rule system in \Figure{fig:product_sha_words}. 
Then the proof tree has the following form
for some proof tree $T_1$: 
$$
 \Infer
 {
  \SAir{q_1}{a}{q_2} \in \Delta^{A} \quad \SAir{s_1}{a}{s_2} \in \Delta^{S} \quad 
  \Infer{
  		T_1
 	 }
  	{
  		(q_1,s_1) \in \states^{A\times S}
 	 }
  }
  {
  (q_2,s_2) \in \states^{A\times S}
  }
$$
This means that there is a smaller proof tree $T_1$ for inferring $(q_1,s_1) \in
\states^{A\times S}$. By induction hypothesis applied to $T_1$, there 
exists a proof tree $T'_1$ for inferring $(q_1,s_1) \in
\states^{A\hattimes S}$ with the collapsed system in \Figure{prf2-words}:
$$
 \Infer{
  		T'_1
 	 }
  	{
  		(q_1,s_1) \in \states^{A\hattimes S}
 	 }
$$
which leads to the following proof tree for $ (q_2,s_2) \in
\states^{A\times S}$ with the internal rule for the collapsed system:
$$
 \Infer
 {
  \SAir{q_1}{a}{q_2} \in \Delta^{A} \quad \SAir{s_1}{a}{s_2} \in
  \Delta^{S} \quad   	 \Infer{
  		T'_1
 	 }
  	{
  		(q_1,s_1) \in \states^{A\hattimes S}
 	 }
  }
  {  
    (q_2,s_2) \in \states^{A\hattimes S}  
  }
$$
 
\end{proof}

\begin{lemmarep}\label{collapselemma2}
$\SAir{q_1}{a}{q_2} \in \Delta^{\SCL}$ iff $\SAir{q_1}{a}{q_2} \in \Delta^{ \scL}$.
\end{lemmarep}

\begin{proof}
We prove for all $n\ge 0$ that, if $\SAir{q_1}{a}{q_2} \in \Delta^{\SCL}$ 
has a proof tree of size $n$, then there exists a proof tree
for $\SAir{q_1}{a}{q_2} \in \Delta^{ \scl{S}{A}}$ and vice versa. The proof is by induction
on $n$.
\\
\\
If $\SAir{q_1}{a}{q_2} \in \Delta^{ \SCL}$  is inferred by the internal rule of the collapsed system, the proof tree will have the following for some tree $T_1$:
$$
 \Infer
 {
  \SAir{q_1}{a}{q_2} \in \Delta^{A} \quad \SAir{s_1}{a}{s_2} \in \Delta^{S} 
  \quad 
  \Infer{T_1}{(q_1,s_1) \in \states^{A \hattimes S}}
  }
  {
  \SAir{q_1}{a}{q_2} \in \Delta^{ \SCL}
  }
$$
By Lemma \ref{collapselemma1} and the rule of internal rules of the accessible product rule system:
$$
 \Infer
 {
  \SAir{q_1}{a}{q_2} \in \Delta^{A} \quad \SAir{s_1}{a}{s_2} \in \Delta^{S} 
  \quad 
  \Infer{T'_1}{(q_1,s_1) \in \states^{A \times S}}
  }
  {
  \SAir{(q_1,s_1)}{a}{(q_2,s_2)} \in \Delta^{A\times S}
  }
$$

For the inverse direction, if $\SAir{q_1}{a}{q_2} \in \Delta^{\scL}$  is inferred by the internal rule of the accessible product, the proof tree will have the following for some tree $T_1$:
$$
 \Infer
 {
  \SAir{q_1}{a}{q_2} \in \Delta^{A} \quad \SAir{s_1}{a}{s_2} \in \Delta^{S} 
  \quad 
  \Infer{T_1}{(q_1,s_1) \in \states^{A \times S}}
  }
  {
   \Infer{\SAir{(q_1,s_1)}{a}{(q_2,s_2)} \in \Delta^{A\times S}}{\SAir{q_1}{a}{q_2} \in \Delta^{\scL} }
  }
$$
By lemma \ref{collapselemma1} and the rule of internal rules of the collapsed system:
$$
 \Infer
 {
  \SAir{q_1}{a}{q_2} \in \Delta^{A} \quad \SAir{s_1}{a}{s_2} \in \Delta^{S} 
  \quad 
  \Infer{T'_1}{(q_1,s_1) \in \states^{A \times S}}
  }
  {
\SAir{q_1}{a}{q_2} \in \Delta^{\SCL}
  }
$$

\end{proof}

\begin{lemmarep}\label{collapselemma4}
$q \in \states^{\SCL}$ iff $q \in \states^{\scL}$ and
$q \in \Qfin^{\SCL}$ iff $q \in \Qfin^{ \scL}$.
\end{lemmarep}
\begin{proof}
We start proving $q \in \states^{\SCL}$ iff $q \in \states^{\scL}$.
By Lemma \ref{collapselemma1}, and rules of construction of the accessible product, projection, and collapsed systems, this lemma holds for some proof trees $T$ and $T'$ as follows:
$$
 \Infer
  {
   T
  }
  {
 	\Infer
 	{
 	  (q,s) \in \states^{A\times S}
 	}
 	{
 	q \in \states^{\scL}
 	}
  }
  \quad
  \Infer
  {
   T'
  }
  {
  	q \in \states^{\SCL}
  }
$$

Finally, we show $q \in \Qfin^{\SCL}$ iff $q \in \Qfin^{ \scL}$.
Using Lemma \ref{collapselemma1}, there exists some proof trees $T$ and $T'$ that infers $(q,s) \in \states^{A\times S}$ and $(q,s) \in \states^{A\hattimes S}$ in both ways and therefore having the following form of rules:
$$
 \Infer
  {
    q \in \Qfin^A & s \in \Qfin^S & 
    \Infer{T}{(q,s) \in \states^{A\hattimes S}}
  }
  {
  q \in  \Qfin^{\SCL}
  }
  \quad
  \Infer
  {
   q \in \Qfin^A & s \in \Qfin^S &  
   \Infer{T'}{(q,s) \in \states^{A\times S}}
  }
  {
  	\Infer{(q,s) \in \Qfin^{A\times S}}{q \in  \Qfin^{\scL}}
  }
$$

\end{proof}
\end{toappendix}

  %\input{schema-det-proof-words}

  %\section{Proof of the Main Theorem}
\label{sec:schema-det-proof}

\medskip

\noindent\textit{Proof of Correctness Theorem \ref{detS1}.} 
Instantiating the system of collapsed rules for schema-based cleaning 
from \Figure{prf2-words} with $\det(A)$ for $A$ yields the rule
system in \Figure{inst-words}.
%
%\begin{toappendix}
\begin{figure}[t]
$$
\begin{array}{c}
\infer[]
  {
  	Q \in \Qinith^{\hatscl{S}(det(A))} \quad (Q,s) \in \states^{det(A) \hattimes S}
  }
  {
	Q \in \Qinith^{det(A)} \quad s \in   \Qinit^S
  }
  \quad
   \infer[]
  {
  Q \in  \Qfin^{\hatscl{S}(det(A))}
  }
  {
  Q \in \Qfin^{det(A)} & s \in \Qfin^S & (Q,s) \in \states^{det(A)\times S}
  }
  \\[.2em]
   \infer
  {
	  Q \in \states^{\hatscl{S}(det(A))}  
  }
  {
  	(Q,s) \in \states^{det(A) \hattimes S}
  }
\quad  \infer[]{
   \SAir{Q_1}{a}{Q_2} \in \Delta^{\hatscl{S}(det(A))} \quad (Q_2,s_2) \in \states^{det(A)\hattimes S}
  }
  {
  \SAir{Q_1}{a}{Q_2} \in \Delta^{det(A)} \quad \SAir{s_1}{a}{s_2} \in \Delta^{S} \quad (Q_1,s_1) \in \states^{det(A)\times S}
  }
\end{array}
$$
$$
\DSCL=(\Sigma,\states^{\DSCL},\Delta^{\DSCL},\Qinit^{\DSCL},\Qfin^{\DSCL})
$$
\caption{\label{inst-words}Instantiation of the collapsed rules
  for schema-based cleaning from \Figure{prf2-words} with $\det(A)$.}
\end{figure}
%\end{toappendix}
  %
We can identify the instantiated collapsed system for $\hatscl{S}(det(A))$ 
with that for $\detS(A)$ in \Figure{scd-words}, by  identifying the
judgements $ (Q,s) \in \states^{det(A)\hattimes S}$ with
judgments $Q\sim s$. After renaming the predicates, 
the inference rules for the corresponding judgments 
are the same. Hence $\hatscl{S}(det(A))=\detS(A)$,
so that Proposition \ref{collapse-words} implies
$\scl{S}(det(A))=\detS(A)$. \qed

%The proof will be given in Section \ref{sec:schema-clean-proof-words}
%based on the idea that the inference rules for schema-based
%determinization can be obtained by instantiating 
%a inference rule system for schema-based cleaning
%with $\det(A)$.} % jo

%\begin{propositionrep}\label{compl-detS-words}
\begin{proposition}\label{compl-detS-words}
  The schema-based determinization $\detS(A)$ for a \NFA $A$ and
  a \DFA $S$ over $\Sigma$ can be computed in expected amortized time
  $O(|\states^{\det(A) \times S}||\Sigma| + |\states^{\detS(A)}||\Delta^A| + |A| +|S|)$.
\end{proposition}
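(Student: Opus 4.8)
The plan is to refine the worklist algorithm of \Figure{algo-schema-det-words} so that the only expensive step, a linear scan of $\Delta^A$ to compute one transition of $\det(A)$, is performed once per \emph{distinct} subset $Q\in\states^{\detS(A)}$ rather than once per agenda item. As in the proofs of Propositions \ref{comp-detacc-words} and \ref{compl-scl-words}, all sets ($Store$, $Rules$, the membership test guarding the agenda) are realised with dynamic perfect hashing \cite{DBLP:journals/siamcomp/DietzfelbingerKMHRT94}, so insertion and membership take expected amortized time $O(1)$; a preprocessing pass stores $\Delta^A$ for scanning and $\Delta^S$ in a hash table keyed by $(s,a)$, in time $O(|A|+|S|)$, after which the unique $S$-successor of a state under a letter is retrieved in $O(1)$.

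The refinement adds a cache $\tau$: a hash table whose keys are the subsets $Q_1$ already popped, and whose value $\tau(Q_1)$ is a hash table sending each letter $a$ with $\{q'\mid \SAir{q}{a}{q'}\in\Delta^A,\ q\in Q_1\}\neq\emptyset$ to the unique $Q_2$ with $\SAir{Q_1}{a}{Q_2}\in\Delta^{\det(A)}$. When a pair $Q_1\sim s_1$ is popped and $\tau(Q_1)$ is undefined, it is computed by a single scan of $\Delta^A$ in time $O(|\Delta^A|)$, exactly as the inner loop of \Figure{algo-accdet-words}; otherwise it is reused. The remainder of the loop body — ranging over $a\in\Sigma$, looking up the at most one $Q_2$ in $\tau(Q_1)$ and the at most one $s_2$ in $\Delta^S$, inserting the rule $\SAir{Q_1}{a}{Q_2}$, and, if $Q_2\sim s_2$ is new, pushing it — costs $O(|\Sigma|)$ expected amortized time per popped pair, because $\det(A)$ and $S$ are deterministic, so every letter yields at most one rule and at most one new agenda entry; in particular the size of $Store$ and $|\Delta^{\detS(A)}|$ are $O(|\states^{\det(A)\times S}|\,|\Sigma|)$, so the final computation of $\init^{\detS(A)}$ and $F^{\detS(A)}$ costs $O(|\states^{\det(A)\times S}|)$.

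For the count I would use the correctness analysis behind Theorem \ref{detS1}: a pair $Q\sim s$ is pushed exactly when $(Q,s)$ belongs to $\states^{\det(A)\hattimes S}$, which by Proposition \ref{collapse-words} equals $\states^{\det(A)\times S}$; hence the loop runs $|\states^{\det(A)\times S}|$ times and contributes $O(|\states^{\det(A)\times S}|\,|\Sigma|)$ overall. On the other hand, the first components $Q_1$ that ever occur are exactly the states of $\detS(A)$, so the $O(|\Delta^A|)$ scan is executed $|\states^{\detS(A)}|$ times and contributes $O(|\states^{\detS(A)}|\,|\Delta^A|)$. Adding the $O(|A|+|S|)$ preprocessing yields the stated bound.

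The point that needs care — and the only place where the analysis is not just the bookkeeping of Propositions \ref{comp-detacc-words} and \ref{compl-scl-words} carried over — is the asymmetry between the two parameters: the algorithm as literally written in \Figure{algo-schema-det-words} recomputes the $\det(A)$-transitions of $Q_1$ on every pop, which would only give $O(|\states^{\det(A)\times S}|\,|\Delta^A|)$. One must observe that caching by first component is sound, because the $\det(A)$-transition relation out of a fixed $Q_1$ does not depend on the schema state accompanying it, and that the number of distinct first components is $|\states^{\detS(A)}|$, which can be strictly smaller than $|\states^{\det(A)\times S}|$ since several schema states may align to the same $Q$. Keeping these two quantities separate in the amortized analysis is the whole content of the proof.
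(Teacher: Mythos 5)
Your proof is correct and follows essentially the same route as the paper's: the same worklist algorithm over alignments $Q\sim s$, the same hashed $Store$/$Agenda$/$Rules$ data structures, and the same two-part count ($|\states^{\det(A)\times S}|$ agenda pops at $O(|\Sigma|)$ each, plus the on-the-fly construction of $\det(A)$-transitions charged against $|\Delta^A|$). You are in fact more careful than the paper on the one delicate point: the paper merely asserts that the on-the-fly determinization costs $O(|\states^{\detS(A)}|\,|\Delta^A|)$ in total, whereas your explicit cache keyed by the first component $Q_1$ is exactly what is needed to justify charging the $O(|\Delta^A|)$ scan once per state of $\detS(A)$ rather than once per pair of $\states^{\det(A)\times S}$.
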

%\end{propositionrep}

%  \begin{appendixproof}
\begin{proof}
     An algorithm computing the
      fixed points of the inference rules of schema-based determinization
      from \Figure{scd-words} is given in \Figure{algo-schema-det-words}.
      It refines the algorithm computing the accessible product with
      on-the-fly determinization and projection.
       
  On the stack $Agenda$, the algorithm stores alignments $Q\sim s$
  such that $(Q,s)\in \states^{\det(A)\times S}$ that were not
  considered before. Transition rules of $\detS(A)$ are collected in
  hash set $\mathit{Rules}$, using the dynamic perfect hashing
  aforementioned.
  The alignments $Q_1 \sim s_1$ popped from the agenda are processed as follows:
  For any letter $a\in\Sigma$, the sets
  $R=\{Q_2\mid Q_1\xrightarrow a Q_2\in \Delta^{\det(A)}\}$
  and $P=\{s_2\mid s_1\xrightarrow a s_2\in \Delta^S\}$ are
  computed.
 One then pushes all new pairs $Q_2\sim s_2$ with $Q_2\in P$
  and $s_2\in R$ into the agenda, and adds $ Q_1\xrightarrow a Q_2$ to the set $\mathit{Rules}$.
  Since $S$ and $\det(A)$ are deterministic
  there is at most one pair $(Q,s) \in P\times R$ for $Q_1$ and $s_1$.
  So the time for treating one pair on the
  agenda is in $O(|\Sigma|)$ plus the time
  for building the needed transition rules
  of $\det(A)$ from $\Delta^A$ on the fly.
  The time for the on the fly computation
  of transition rules of $\det(A)$ is in
  time $O(|\states^{\detS(A)}||\Delta^A|)$.
%  \joachim{I don't see the $+|\Sigma|$ above.}
  The overall number of pairs on the agenda
  is at most $|\states^{\det(A)\times S}|$ so the main
  while loop of the algorithm
  requires time in $O(|\states^{\det(A)\times S}||\Sigma|)$
  apart from on the fly determinization. %This will give us an overall complexity for the algorithm in   $O(|\states^{\det(A) \times S}||\Sigma| + |\states^{\detS(A)}||\Delta^A| + |A| +|S|)$, with consideration of the preprocessing time of $A$ and $S$.
    \end{proof}
%\end{appendixproof}

  By Proposition \ref{comp-detacc-words}, computing $\det(A)$ requires
  time $O(|\states^{det(A)}|\MULTC |\Delta^A| + |A|)$. Therefore, with Proposition \ref{compl-scl-words},
    the accessible product $\det(A)\times S$ can be computed from $A$ and $S$ in
    time $O(|\states^{\det(A)\times S}| |\Sigma| +
    |\states^{det(A)}|\MULTC |\Delta^A|+|A|+|S|)$.
    Since %$\states^{\detS(A)\times S}=\states^{\det(A)\times S}$    and
    $\states^{\detS(A)}\subseteq \states^{\det(A)}$
    the proposition shows that schema-based determinization is at most as efficient in the worst case 
  as accessible determinization followed by schema-based cleaning.
If $|\states^{\det(A)\times S}||\Sigma|<|\states^{\det(A)}||\Delta^A| $ then it is more efficient, since schema-based determinization
  avoids the computation of $\det(A)$
  all over. Instead, it only computes the accessible product $\det(A)\times S$, which may be way smaller,
  since exponentially many states of $\det(A)$ may not be aligned to any state of $S$. Sometimes, however, the
  accessible product may be bigger. In this case, schema-based determinization may be more costly than
  pure accessible determinization, not followed by schema-based cleaning.

%%%%%%%%%%%

\section{Stepwise Hedge Automata for Nested Words}
\label{sec:nw}

We next recall \SHAs \cite{Sakho} for defining
languages of nested words, regular schemas and queries.
Nested words generalize on words by adding parenthesis that must be
well-nested. While containing words natively, they also generalize
on unranked trees, and hedges. We restrict ourselves to nested
words with a single pair of opening and closing parenthesis
$\OP$ and $\CL$. %, since named parenthesis can be encoded easily.
Nested words over a finite alphabet $\Sigma$ of internal letters
have the following abstract syntax. 

\noindent
\begin{minipage}{\textwidth}
\begin{minipage}{.83\textwidth}
$$
\begin{array}{rcl}
  \nw,\nw' \in \NW&::=& \epsilon \mid a \mid
                             \TC\nw \mid
                             \nw\SEQ \nw'\qquad\text{
                             where }a\in\Sigma
\end{array}
$$
  We assume that concatenation $\cdot$ is associative and that
  the empty word $\epsilon$ is a neutral element,
  that is $w\cdot(w'\cdot w'')=(w\cdot w')\cdot w''$
  and $\epsilon\SEQ \nw=\nw=\nw\SEQ\epsilon$.
Nested words can be identified with hedges, i.e., words of unranked trees and
letters from $\Sigma$. Seen as a graph, the inner nodes are labeled by the tree constructor $\OP\CL$ and the leafs by
symbols in $\Sigma$ or the tree constructor.  For instance $\TC{ a \SEQ \TC{b} \SEQ \epsilon} \SEQ c \SEQ \TC{d
    \SEQ  \TC{\epsilon}}$ corresponds to the hed-
\end{minipage}
\begin{minipage}{.16\textwidth}
\footnotesize{\begin{forest}
for tree={s sep=(3-level)*0.6mm}
        [$\TC{}$
             [a]
             [$\TC{}$
               [b]]]
\end{forest}
\begin{forest}
        [c]
\end{forest}
\begin{forest}
        [$\TC{}$
             [d]
             [$\TC{}$]]
\end{forest}}
\end{minipage}
\end{minipage}

\noindent ge~on~the~right. A nested word of type \emph{tree}
has the form $\TC{h}$.
Note that dangling parentheses are ruled out and
that labeled parentheses can be simulated by using internal
letters. \ignore{Binary trees with a single 
constant $\epsilon$ and a single binary function symbol $\OP .,.\CL$
are nested words of the form 
$
\begin{array}{rcl}
  t,t'  \in\mathcal{T} &::=& \OP \epsilon\CL  \mid  \OP  t \SEQ t' \CL 
\end{array}
$.
Words in $\Sigma^*$ are captured by nested words of the form
$
\begin{array}{rcl}
  w,w' \in \Sigma^*&::=& \epsilon \mid  a \mid w \SEQ w' 
\end{array}
$ 
where $a$ in $\Sigma$. Nested words also generalize over
sequences of binary trees, which have the form:
$
\begin{array}{rcl}
  l,l' \in \mathcal{S}&::=& \epsilon\mid  t \mid l \SEQ l' 
\end{array}
$ 
where $t\in\mathcal{T}$. Labeled unranked trees such as 
$a(b(),c())$ can be represented by the nested word 
$\OP a \SEQ \OP b \CL \SEQ \OP c \CL\CL$. In this way, the labeled tree $a()$
is represented by the nested word $\OP a \CL$.}
\XML documents are labeled unranked trees, for instance:
$
\OP a\  name=``uff"\CL \OP b\CL isgaga \OP d/\CL \OP /b\CL \OP c/\CL \OP /a\CL
$. Labeled unranked trees satisfying the \XML data model can be represented as
nested words over an alphabet that contains the \XML node-types $(elem, attr,
text,$ $\ldots)$, the \XML names of the document $(a,\ldots,d,name)$, and the
characters of the data values, say UTF8. For the above example, we get
the nested word
$
  \OP elem\SEQ a\SEQ  \OP attr\SEQ name\SEQ u\SEQ f\SEQ f\CL \OP elem\SEQ b\SEQ \OP
  text\SEQ  i\SEQ  s\SEQ  g\SEQ  a\SEQ  g\SEQ  a\CL \OP elem\SEQ  d\CL \CL \OP elem\SEQ c\CL \CL
$

%\section{Stepwise Hedge Automata}
\label{sec:stepwise}

\begin{definition}
A \SHA  is a tuple $\SAmath=(\Sigma,\states,\Delta,\Qinith,\Qfin)$
where $\Delta=(\Delta',@^{\Delta},\TIR^{\Delta})$ such that
$(\Sigma,\states,\Delta',\Qinith,\Qfin)$ is a \NFA,
$\TIR^\Delta \subseteq \states$ is a set of
tree initial states and
$\SAtrans{@} \subseteq \states^3$
a set of apply rules.
\end{definition}

\SHAs can be drawn as graphs while extending on the graphs of \NFAs.   A tree initial state $q\in \Qinitt$ is drawn as a node  $\ItstatePic{\ q\ }$ with an incoming
tree arrow. An applyrule
$(q_1,q,q_2)\in \SAtrans{@}$ is drawn as a blue edge  $\statePic{q_1}{\color{blue}\xrightarrow{q}}\statePic{q_2}$ that is labeled
by a state $q\in\states$ rather than a letter $a\in \Sigma$. It
states that a nested word in state $q_1$ can be extended by a tree in
state $q$ and become a nested word in state $q_2$. 

For instance, the \SHA \onexnw is drawn graphically 
in \Figure{one-x-step}. It accepts all nested
words over $\Sigma\uplus\{x,\neg x\}$ that contain
exactly one occurrence of letter $x$. Compared
to the \NFA \onexneg from \Figure{DFAONEX}, 
the \SHA \onexnw contains three additional apply rules
$(0,0,0)$, $(0,1,1)$, $(1,0,1) \in @^{\Delta^\onexnw} $
%
%\statePic{0}\xrightarrow{0}\statePic{0}$,
%$\statePic{0}\xrightarrow{1}\statePic{1}$,
%and 
%$\statePic{1}\xrightarrow{0}\statePic{1}$, 
% $0@0\to 0$, $0@1\to 1$ and $1@0\to 1$
for reading the states assigned to
subtrees. The state $0$
is chosen as the single tree initial state.

Transitions for \NFAs on words can be lifted to transitions for
\SHAs of the form
$q\xrightarrow{\nw} q' \wrt \Delta$ where $\nw\in\NW$ and $q,
q'\in\states$.
For this, we add the following inference rule to the previous
rules for \NFAs:
$$
\infer{q_1\xrightarrow{\langle\nw\rangle  } q_2\wrt \Delta}
{
q'\in \Qinittgen{\Delta}
\qquad
q' \xrightarrow{w} q \wrt\Delta
\qquad
(q_1,q,q_2) \in @^\Delta}
$$
The rule says that a tree $\langle\nw\rangle$ can transit from a state
$q_1$ to a state $q_2$ if
there is an apply rule
$(q_1,q,q_2)\in @^\Delta$ so that $\nw$ can transit from
some tree initial state $q'\in\Qinitt$ to $q$. Otherwise, the language $\L(A)$
of nested words 
accepted by a \SHA $A$ is defined as in the case of \NFAs.

\begin{definition}\label{det}
A \SHA $(\Sigma,\states,\Delta,\Qinith,\Qfin)$ is \emph{deterministic} or
equivalently a \dSHA if it satisfies:
\begin{itemize}
\item $\Qinith$ and $\Qinitt$ both contain at most one element,
%\item $\Qinit_t$ contains at most one element,
%  \item there is no epsilon transition, i.e.,
 %   $\SAtrans{\epsilon}=\emptyset$,
  \item $\SAtrans{a}$ is a partial function
    from $\states$ to $\states$ for all $a\in\Sigma$, and
\item $\SAtrans{@}$ is a partial function
    from $\states\times \states$  to $\states$.
\end{itemize}
\end{definition}
Note that if $A$ is a \dSHA and $\Delta=(\Delta',@^\Delta,\Qinitt)$ then $A'=(\Sigma,\states,\Delta',\Qinith,\Qfin)$ is a \DFA. Conversely any \DFA $A'$
defines a \dSHA with $@^\Delta=\emptyset$ and $\Qinit=\emptyset$.
For instance,
the \SHA $\onexnw$ in \Figure{one-x-step} contains the \DFA \onexneg from \Figure{DFAONEX} with $\Sigma$ instantiated by $\Sigma\uplus\{x\}$.

A schema for nested words over $\Sigma$ is a \dSHA over
$\Sigma$. Note that schemas for nested words generalize over schemas
of words, since \dSHAs generalize on \DFAs.
%For illustration,
%in \Figure{xml-document}, a schema recognizing the set of
%nested words that encode \XML document is given.
%
%
\begin{figure}[t]
$$
  \infer{
  \Qinittgen{\Delta^A}\in \accdet\A
  }
  {
   \Qinittgen{\Delta^A}\not=\emptyset
  }
  \qquad 
 \infer{
Q_1@Q_2 \to  Q'\in\Delta^{det(A)} \qquad Q'\in \accdet\A
    }
  {
\begin{array}{c}
   Q_1 \in \accdet\A\qquad   Q_2 \in \accdet\A\\
   Q' =\{ q'\in\states^A\mid   q_1@q_2\to q' \in \Delta^A, \ q_1 \in Q_1, q_2 \in Q_2\}  \not=\emptyset 
   \end{array}
 }
$$
\caption{\label{extaccdet} Accessible
  determinization $\det(A)$ lifted from \NFAs to \SHAs.}
  \end{figure}
\begin{toappendix}
  \begin{figure}[t]
  \begin{sml}
fun detSHA(A) =
  let $Store=hashset.new(\emptyset)$
  let $Agenda=list.new()$ and $Rules=hashset.new(\emptyset)$
  if $\init^A\not=\emptyset$ then $Agenda.add(\init^A)$   
  while $Agenda.notEmpty()$ do 
    let $(Q) = Agenda.pop()$
    let $h$ be an empty hash table with keys from $\Sigma$.
      // the values will be nonempty hash subsets of $\states^A$
    for $q\xrightarrow{a} q' \in \Delta^A$ such that $q\in Q$ do
      if $h.get(a)=undef$ then $h.add(a,hashset.new(\emptyset))$
    $(h.get(a)).add(q')$ 
    for $(a,Q')$ in $h.tolist()$ do $Rules.add(\SAir{Q}{a}{Q'})$
      if not $Store.member(Q')$  then $Store.add(Q')$ $Agenda.push(Q')$
    for $Q_1 \in Store$ do
      let $Q'=\{q' \mid \SAar{q}{q_1}{q'}, q_1 \in Q_1, q \in Q\}$ 
      if $Q' \not= \emptyset$ then $Rules.add(\SAar{Q}{Q_1}{Q'})$
    	  if not $Store.member(Q')$ then $Store.add(Q')$ $Agenda.push(Q')$ 
      let $Q''=\{q'' \mid \SAar{q_1}{q}{q''}, q_1 \in Q_1, q \in Q\}$ 
      if $Q'' \not= \emptyset$ then $Rules.add(\SAar{Q_1}{Q}{Q''})$
  let $\init^{det(A)}=\{Q \mid Q \in Store, Q \cap \init^A \not= \emptyset\}$ and $F^{det(A)}=\{Q \mid Q \in Store, Q \cap \Qfin^A \not= \emptyset\}$
  return $(\Sigma,Store.toSet(),Rules.toSet(),\init^{det(A)},F^{det(A)})$
\end{sml}

  \caption{\label{algo-accdet-nwords} An algorithm for accessible determinization of \SHAs.}
\end{figure}
\end{toappendix}
The rules for the accessible determinization $\det(A)$ of
a \SHA $A$ in \Figure{extaccdet} extend on those
for \NFAs in \Figure{accdet}. As for words, $\det(A)$ is
always determinstic, recognizes the same language
as $A$, and contains only accessible states. The complexity of accessible determinization in case of \SHA go similarly to \DFA, however, the apply rules will introduce quadratic factor in the number of states. 

\begin{propositionrep}\label{comp-detacc-nwords}
%\begin{proposition}\label{comp-detacc-nwords}
The accessible determinization of a \SHA can be computed in expected amortized time $O(|\states^{\det(A)}|^2\MULTC|\Delta^A| + |A|)$.
%\end{proposition}
\end{propositionrep}

\begin{appendixproof}
%\begin{proof}
  An algorithm
      for computing the fixed points of the inference rules of
      accessible determinization of a \SHA is presented in
      \Figure{algo-accdet-nwords}. It extends on the case of \NFAs
      with the same data structures. It uses dynamic perfect hashing
      for the hash sets. The additional treatment of
      apply rules, that dominates the complexity of the algorithm, works as follows: for each $Q
      \in \states^{det(A)}$ in the $Agenda$ and each state $Q1 \in
      \states^{det(A)}$ in the $Store$, it computes the sets  $Q'=\{q'
      \mid \SAar{q}{q_1}{q'}, q_1 \in Q_1, q \in Q\}$ and $Q''=\{q''
      \mid \SAar{q_1}{q}{q''}, q_1 \in Q_1, q \in Q\}$ and puts all
      new non-empty sets in both the $Agenda$ and the $Store$, while
      adding dynamically the generated apply rules in the hash set
      $Rules$. Again, the overall number of elements in the agenda
      will be $|\states^{det(A)}|$, requiring time
      in $O(|\states^{det(A)}|^2\MULTC|\Delta^A|)$. With a precomputation time of $A$ in $O(|A|)$, the total computation will be in $O(|\states^{\det(A)}|^2\MULTC|\Delta^A| + |A|)$.   
%\end{proof}
\end{appendixproof}

The notions of monadic query functions $\QUERY$ can be lifted from
words to nested words, so that it selects nodes of the graph of
a nested word. For this, we have to fix one of manner possible
manners to define identifiers for these nodes. The set of nodes of a nested word $w$
is denoted by $\nodes(w)\subseteq\Nat$.

\begin{toappendix}
\begin{figure}[tb]
\centering
  \includegraphics[scale=0.25]{dSHA-QN7.png}
  \caption{\label{dSHA-QN7}A $\onexnw$-cleaned minimal \dSHA for the \XPath
    query $\QNN7$.}
\end{figure}
\end{toappendix}
For indicating the selection of node $\pi\in\nodes(w)$, we insert the
variable $x$ into the sequence of letters following the opening parenthesis
of $\pi$. If we don't want to select $\pi$, we insert
the letter $\neg x$ instead. For any nested word $w$
with alphabet $\Sigma$, the nested word $w[\pi/x]$ obtained
by insertion of $x$ or $\neg x$ at a node $\pi\in\nodes(w)$ has alphabet
$\Sigma\uplus\{x,\neg x\}$. As before, we define
$L_\QUERY=\{w*[\pi/x]\mid w\in\NW, \pi\in \QUERY(w)\}$.

The notion of a query automata can now be lifted from words
to nested words straightforwardly: a query automaton
for nested words over $\Sigma$ is a \SHA $A$
with alphabet $\Sigma\cup\{x,\neg x\}$. It
defines the unique total query $\QUERY$ such that
$L_\QUERY=\L(A)\cap\L(\onexnw)$.
\Long{A deterministic query automaton for the \XPath $\QNN7$ 
on \XML documents is given in \Figure{dSHA-QN7}.}

\section{Schema-Based Determinization for SHAs}
\label{sec:schema-det}

We can lift all previous algorithms from \NFAs to \SHAs
while extending the system of inference rules. The additional rules
concern tree initial states, that work in analogy
to initial states, and also apply rules that works similarly as 
internal rules. The new inference rules
for accessible products $A\times S$ are given in \Figure{extaccprod}
\Long{and for projection $\prf{A}(A\times S)$ in \Figure{extprojection}}. As before we define
$\scl{S}(A) = \prf{A}(A \times S)$. The rules for  schema-based
determinization $\detS(A)$ are extended in \Figure{detsha}. 
  The complexity upper bound, however, now becomes
  quadratic even with fixed alphabet:

\begin{figure}[t]
$$
  \infer[]
  {
(q,s) \in \Qinittgen{\Delta^{A\times S}} \quad 
(q,s) \in \states^{A\times S}
  }
  {
  q \in \Qinittgen{\Delta^A} & s \in \Qinittgen{\Delta^S}
  } 
\quad
 \infer[]
  {
          \SAar{(q_1,s_1)}{(q,s)}{(q_2,s_2)} \in \Delta^{A\times S} \quad (q_2,s_2) \in \states^{A\times S}
  }
       {
         \begin{array}{c}
           (q_1,s_1) \in \states^{A\times S} \\
           (q,s) \in \states^{A\times S}
         \end{array}
         \qquad
         \begin{array}{c}  \SAar{q_1}{q}{q_2} \in \Delta^A \\
                           \SAar{s_1}{s}{s_2} \in \Delta^S  
    \end{array}
  }
$$
\caption{\label{extaccprod} Lifting  accessible products to \SHAs.}

\end{figure}
\begin{toappendix}
\begin{figure}[t]
$$
\infer[]
  {
\begin{array}{c@{\qquad}c}
	q \in \Qinittgen{\Delta^{\prf{A}(A \times S)}} 
% &       q \in \states^{\prf{A}(A\times S)}
\end{array}
  }
  {
  	(q,s) \in \Qinittgen{\Delta^{A\times S}} %%\quad	(q,s) \in \states^{A\times S}
  }
  \quad
   \infer[]
  {
  \SAar{q_1}{q}{q_2} \in \Delta^{\prf{A}(A \times S)}
  }
  {
   \SAar{(q_1,s_1)}{(q,s)}{(q_2,s_2)} \in \Delta^{A\times S} %%\quad (q_2,s_2) \in \states^{A\times S}
  }
$$
\caption{\label{extprojection} Lifting projections  $
\prf{A}(A\times S)$ to \SHAs.}
\end{figure}
\end{toappendix}
\begin{figure}[t]
$$
\infer[]
{
\Qinittgen{\Delta^A}\in \Qinittgen{\Delta^{\detS(A)}} \quad
\Qinittgen{\Delta^A}\sim s
}
{
\Qinittgen{\Delta^S}=\{s\}
}
\qquad
\infer[]
   {
Q_1@Q_2 \to  Q'\in\Delta^{\detS(A)} \qquad Q'\sim s'
    }
  {
\begin{array}{c}
s_1@s_2\to s' \in \rules^S \qquad Q_1\sim s_1\qquad Q_2\sim s_2 
  \\
 Q_1@Q_2 \to  Q' \in \Delta^{det(A)} 
\end{array}
    }
$$
\caption{\label{detsha} Extension of schema-based determinization to \SHAs.}
\end{figure}

\begin{propositionrep}\label{schema-clean-sha}
%\begin{proposition}\label{schema-clean-sha}
  If $A$ and $S$ are $\dSHAs$ then the accessible product
  $A\times S$ and the schema-based cleaning $\sclS(A)$ can be computed
  in expected amortized time $O(|\states^{A\times S}|^2 + |\states^{A \times S}| |\Sigma| + |A| + |S| )$.
%\end{proposition}
\end{propositionrep}

\begin{toappendix}
%\begin{proof}
 The algorithm in \Figure{algo-app-rules} is obtained by lifting the
 algorithm for \DFAs in \Figure{algo-int-rules-words} to \SHAs. For
 the case of apply rules, we have to combine each pair
 $(q_1,s_1)\in\states^{A\times S}$ in the stack $\mathit{Agenda}$ with
 all  $(q,s)\in\states^{A\times S}$ in the hash set $\mathit{Store}$,
 in both directions. The time to treat these pairs is $O(|\states^{A
   \times S}|^2)$, so quadratic in the worst case.
 As before, no state $(q_1,s_1)$ will be processed twice, due to the set membership test before pushing a pair into the agenda. 
%\end{proof}
\end{toappendix}

\begin{toappendix}
\begin{figure}[t]
\begin{sml}
fun $A \times S$ =
  let $Store=hashset.new(\emptyset)$ and $Agenda=list.new()$ and $Rules=hashset.new(\emptyset)$
  if $\init^A=\{q_0\}$ and  $\init^S=\{s_0\}$ then $Agenda.add((q_0,s_0))$  
  while $Agenda.notEmpty()$ do
    let $(q_1,s_1)= Agenda.pop()$
    for $a \in \Sigma$ do
      let $Q=\{ q_2 \mid \SAir{q_1}{a}{q_2} \in \Delta^A\}$ and $R=\{ s_2 \mid \SAir{s_1}{a}{s_2} \in \Delta^S\}$
      for $q_2 \in Q$ and $s_2 \in R$ do $Rules.add(\SAir{(q_1,s_1)}{a}{(q_2,s_2)})$
        if not $Store.member((q_2,s_2))$
           then $Store.add((q_2,s_2))$  $Agenda.push((q_2,s_2))$
    for $(q,s) \in Store$ do
      let $Q'=\{ q_2 \mid  \SAar{q_1}{q}{q_2} \in \Delta^A\}$ and $R'=\{ s_2 \mid  \SAar{s_1}{s}{s_2} \in \Delta^S\}$
      for $q_2 \in Q'$ and $ s_2 \in R'$ do $Rules.add(\SAir{(q_1,s_1)}{(q,s)}{(q_2,s_2)})$
        if not $Store.member((q_2,s_2))$
           then $Store.add((q_2,s_2))$  $Agenda.push((q_2,s_2))$
      let $Q''=\{q_2 \mid  \SAar{q}{q_1}{q_2} \in \Delta^A\}$ and $R''=\{ s_2 \mid  \SAar{s}{s_1}{s_2} \in \Delta^S\}$
      for $q_2 \in Q''$ and $ s_2 \in R''$ do $Rules.add(\SAir{(q,s)}{(q_1,s_1)}{(q_2,s_2)})$
        if not $Store.member((q_2,s_2))$
           then $Store.add((q_2,s_2))$  $Agenda.push((q_2,s_2))$
  let $\init^{A\times S}=\{(q_0,s_0) \mid (q_0,s_0) \in Store \}$ and $F^{A \times S}=\{(q,s) \mid (q,s) \in Store, q\in F^A, s \in F^S\}$
  return $(\Sigma,Store.toSet(),Rules.toSet(),\init^{A\times S},F^{A \times S})$
\end{sml}
\caption{\label{algo-app-rules} An algorithm computing the accessible product of \dSHAs $A$ and $S$.}
\end{figure}
\end{toappendix}

\begin{theo}[Correctness]\label{detS2} $\detS(A) = \schemaclean{\det(A)}{S}$
 for any \SHA $A$ and dSHA $S$
  with the same alphabet.
\end{theo}

\begin{propositionrep}\label{prop:sdet-sha}
%\begin{proposition}\label{prop:sdet-sha}
The schema-based determinization $\detS(A)$ 
    of a \SHA $A$ with respect to a \dSHA $S$ can be
    computed in expected amortized time
$O(|\states^{\det(A)\times S}|^2 + |\states^{\det(A)\times S}|\MULTC
|\Sigma|+ |\states^{\detS(A)}|^2\MULTC |\Delta^A|  + |A| + |S|  )$.
%\end{proposition}
\end{propositionrep}

The proof of Theorem \ref{detS2} \Long{presented in Section
\ref{sec:schema-clean-proof-sha}} extends
on that for \NFAs (Theorem \ref{detS1})
in a direct manner.
Proposition \ref{prop:sdet-sha} follows the result in Proposition \ref{compl-detS-words} with an additional quadratic factor in the size of states of the product $det(A) \times S$ and the states of the schema-based determinized automaton. This is always due to the apply rules of type $\states^3$.
\begin{appendixproof}
%\begin{proof} 
  Analogously to the case of \NFAs on words. 
      The algorithm in
      \Figure{algo-schema-det-nwords} computes the fixed point
      of the inference
      rules of schema-based determinization of \SHAs. As for \NFAs, it 
      stores untreated alignments on a stack $Agenda$ and processed
      alignments in a hash set $Store$. It also collects
      transition rules in a hash set $Rules$. New alignments can
      now be produced by the the inference rule for apply transitions:
      for each alignment $Q_1\sim s_1$ on the $Agenda$
      and $Q_2\sim s_2$ in the $Store$,  the algorithm
      computes the sets $\{s \mid \SAar{s_1}{s_2}{s} \in\Delta^S\}$ and $\{Q \mid \SAar{Q_1}{Q_2}{Q} \in \Delta^{det(A)}\}$
      and pushes all pairs $Q \sim s$ outside the $Store$ to the $Agenda$. There may be at most one such pair since $S$ and $det(A)$ are deterministic.
      We also have to consider the symmetric case where $Q_1\sim s_1$ on the store and $Q_2\sim s_2$ on the $Agenda$.
      Thus, it is in time $O(|\states^{\det(A)\times S}|^2 )$ which is quadratic in the worst case.
      Added to the latter, the cost of computing the transition of $det(A)$ on the fly which is in worst case $O(|\states^{\detS(A)}|^2 \MULTC|\Delta^A| + |A|)$. Therefore, having the whole algorithm running, including the time for computing the internal rules, in $O(|\states^{\det(A)\times S}|^2 + |\states^{\det(A)\times S}|\MULTC
|\Sigma|+ |\states^{\detS(A)}|^2\MULTC |\Delta^A|  + |A| + |S|  )$ . \ignore{We also note that if $Q\sim s$ is added to the store than $(Q,s)\in\states^{\det(A)\times S}$.} 
%\end{proof}
\end{appendixproof}
By Propositions \ref{comp-detacc-nwords} and
  \ref{schema-clean-sha}, computing $\sclS(det(A))$ by
  schema-based cleaning after accessible determinization needs
  time in $O(|\states^{\det(A)\times S}|^2 + |\states^{\det(A)\times S}|\MULTC
|\Sigma|+ |\states^{\det(A)}|^2\MULTC |\Delta^A|  + |A| + |S|  )$.
%  $O(|\states^{\det(A)}|^2|\Delta^A| +
 % |\states^{\det(A)}|\MULTC |\Delta^A| + |A|)$.
  This complexity bound is similar to that of
schema-based determinization from Proposition \ref{prop:sdet-sha}. 
Since $\states^{\detS(A)}\subseteq \states^{\det(A)}$, 
Proposition \ref{prop:sdet-sha} shows that the worst case time complexity of
schema-based determinization is never worse than for schema-based cleaning after determinization. 
\begin{toappendix}
\begin{figure}[t]
\begin{sml}
fun detS($A$,$S$) =
  let $Store=hashset.new(\emptyset)$
  let $Agenda=list.new()$ and $Rules=hashset.new(\emptyset)$
  if $\init^A\not=\emptyset$ and  $\init^S=\{s_0\}$ then $Agenda.add(\init^A\sim s_0)$ 
  while $Agenda.notEmpty()$ do
    let $(Q_1 \sim s_1)= Agenda.pop()$
    for $a \in \Sigma$ do
      let $P=\{ Q_2 \mid \SAir{Q_1}{a}{Q_2} \in \Delta^{det(A)}\}$ and $R=\{ s_2 \mid \SAir{s_1}{a}{s_2} \in \Delta^S\}$
      for $Q_2 \in P$ and $s_2 \in R$ do $Rules.add(\SAir{Q_1}{a}{Q_2})$
         if not $Store.member(Q_2 \sim s_2)$ 
            then $Store.add(Q_2 \sim s_2)$  $Agenda.push(Q_2 \sim s_2)$
    for $(Q \sim s) \in Store$ do         
      let $P'=\{Q_2 \mid  \SAar{Q_1}{Q}{Q_2} \in \Delta^{det(A)}\}$ and $R'=\{s_2 \mid  \SAar{s_1}{s}{s_2} \in \Delta^S\}$
      for $Q_2 \in P'$ and $s_2 \in R'$ do $Rules.add(\SAar{Q_1}{Q}{Q_2})$
         if not $Store.member(Q_2 \sim s_2)$ 
            then $Store.add(Q_2 \sim s_2)$  $Agenda.push(Q_2 \sim s_2)$      
      let $P''=\{Q_2 \mid  \SAar{Q}{Q_1}{Q_2} \in \Delta^{det(A)}\}$ and $R''=\{s_2 \mid  \SAar{s}{s_1}{s_2} \in \Delta^S\}$
      for $Q_2 \in P''$ and $s_2 \in R''$ do $Rules.add(\SAar{Q}{Q_1}{Q_2})$
         if not $Store.member(Q_2 \sim s_2)$
            then $Store.add(Q_2 \sim s_2)$  $Agenda.push(Q_2 \sim s_2)$
  let $\init^{\detS(A)}=\{Q \mid Q \sim s \in Store, Q \cap \init^A\not=\emptyset\}$ and $F^{\detS(A)}=\{Q \mid Q \sim s \in Store, Q \cap F^A \not= \emptyset\}$
  return $(\Sigma,Store.toSet(),Rules.toSet(),\init^{\detS(A)},F^{\detS(A)})$
\end{sml}
\caption{\label{algo-schema-det-nwords} An algorithm for schema-based determinization of an \SHA $A$ and a $d$\SHA schema $S$}
\end{figure}
\end{toappendix}

\newcommand\XMLDM{\ensuremath{\mathit{S}_{\mathit{XML}}}\xspace}
\newcommand\AN[1]{\ensuremath{\mathit{A}_{\QNN{#1}}}\xspace}

\ignore{****I think this is know irrelevant**** We implemented our algorithm for schema-based determinization in
Scala. A description and benchmarking results
can be found in~\cite{antonio}. 
We used symbolic \SHAs, so any $\Sigma$-rule is counted as $1$. Let
$\AN7$ be the \SHA for the example query \QNN7. It has $145$ states
and size $348$. Determinization of $\AN7$ based on schema $\onexnw $ and the
\XML data model \XMLDM produces a \dSHA with $74$ states of size $277$. 
The minimization of this $\dSHA  $ (see \cite{Sakho}) yields the \dSHA in
\Figure{dSHA-QN7} with $22$ states and size $144$.
In contrast, accessible determinization without any schema produces a
\dSHA with $10.003$ states and size $1.633.790$, for which
our Datalog implementation of schema-based cleaning runs out of
memory.

In this example an alternative solution
  can be obtained without schema-based determinization:
  one can compute $\scl{\XMLDM}(\det(\AN7\times \onexnw))$. This is
   since the determinization of the accessible product
  $\AN7\times \onexnw$ remains sufficiently small  with $74$
  states and size $269$. For general query automata $A$ and
  schemas $S$ and $S'$, however, rather than computing
  $\scl{S}(\det(A\times S'))$ it is better  to compute the
  equivalent automaton $\det_{S}(A\times S')$,  in order to avoid the cleaning of a
  potentially large intermediate result $\det(A\times S')$ with
  respect to schema $S$.}

%\nocite{LibkinMartensVrgoc13,martens_et_al:LIPIcs:2018:8594,DBLP:journals/jcss/FischerL79,arenas2011querying,GottlobKochPichler03c,ThatcherWright68,GottlobKoch02,CarmeNiehrenTommasi04,DBLP:journals/jacm/FaginKRV15,schmid_et_al:LIPIcs.ICDT.2021.4,DBLP:conf/csr/BonevaNS20,DBLP:conf/birthday/BojanczykW08,Thatcher67b,TATA07,DBLP:journals/toit/HosoyaP03,DBLP:journals/siamcomp/DietzfelbingerKMHRT94}

\begin{toappendix}
\input{\OldLong /schema-clean-proof-sha}
\input{\OldLong /schema-det-proof-sha}
\end{toappendix}

\section{Experiments}
\label{sec:exps}

In this section, we present an experimental evaluation of the sizes of
the automata produced by the different  determinization methods. For this,
we consider a scalable family of \SHAs that is compiled from the
following scalable family of \XPath queries where $n$ and $m$ are
natural numbers. 
\begin{verbatim}
       (Qn.m)    //*[self::a0 or ... or self::an]
                 [descendant::*[self::b0 or ... or self::bm]]
\end{verbatim}
Query \texttt{Qn.m} selects all elements of an \XML document, that
are named by either of \texttt{a0}, $\ldots$, \texttt{an}
and have some descendant element named by either
of \texttt{b1}, $\ldots$, \texttt{bm}.  We compile
those \XPath queries to \SHAs based on the
compiler from \cite{Sakho}. As schema $S$, we chose the product of the
\dSHA $\onext$
with a \dSHA for the XML data model %\Short{ (see \Figure{xmlschema} of
% the appendix).}
given in \Figure{xmlschema}.
Beside the concepts presented above, this \SHA also has typed
else rules. Actually, we use a richer class of SHAs in the experiments, which is converted back into the class of the paper when showing the results (except for else rules and typed else rules).
%\begin{toappendix}
\begin{figure}[tb]
\centering
  \includegraphics[scale=0.25]{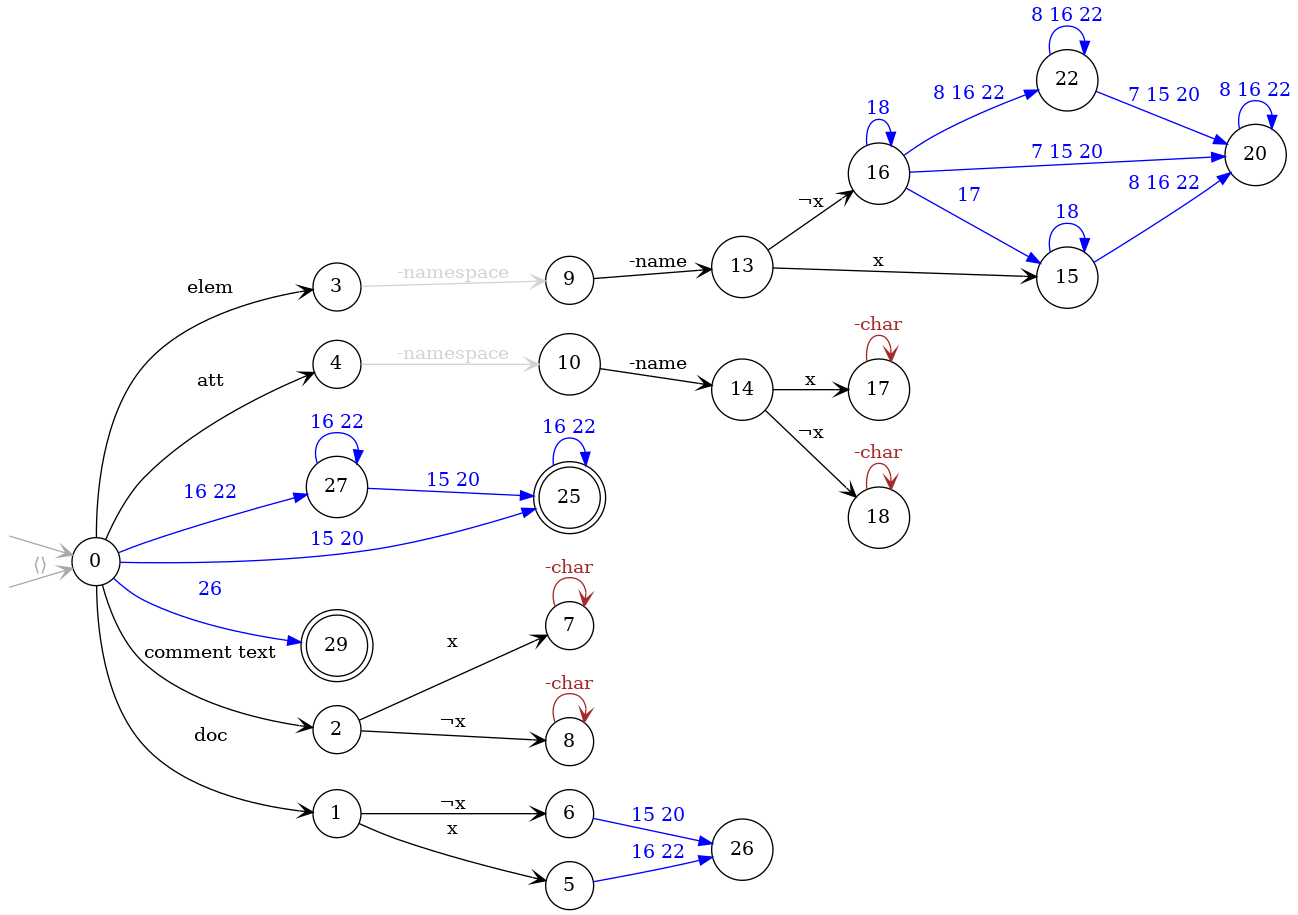}\vspace{-.2cm}
  \caption{\label{xmlschema}A schema for the intersection of \XML data model with $\onext$.}
\end{figure}
%\end{toappendix}

The results of our experiments are summarized in
  \Table{qnm}. For each
automaton we present two numbers,  size(\#states), its size and the
number of its states.
Unless specified otherwise, we use a timeout of 1000
seconds whenever calling some determinization algorithm.
Fields of the table are left blank if an exception was raised.
This happens when the determinization algorithm
reached the timeout, the memory was filled, or the stack overflowed.
We conducted all the experiments on a Dell laptop with the
following specs: Intel® Core™ i7-10875H CPU @ 2.30 GHz,16 cores, and
32 GB of RAM. 

The first column $A$ of \Table{qnm} reports on the \SHAs obtained from
  the queries Qn.m, by the compiler from \cite{Sakho} that is written in XSLT.
  The second column $det(A)$ is obtained from \SHA $A$ by accessible
  determinization. The blank cell in column $det(A)$ for query Q4.4
  was raised by a timeout of the  determinization algorithm. As one can see, this happens
  for all larger pairs ($n,m$). Furthermore, it appears that the sizes
  of the automata $det(A)$ grow
  exponentially with $n+m$.
  
  In the third column $det(A\times
  S)$, the determinization of the product is presented.
  It yields much smaller automata than with
  $det(A)$. For Q4.3 for instance, $det(A)$ has size 53550 (2161)
  while $det(A\times S)$ has size 5412 (438). The computation
  continues successfully until Q6.4. For the larger queries Q6.5 and
  Q6.6, our determinizer runs out of memory. 
  The fourth column $\detS(A)$  reports on schema-based
  determinization. For Q4.3 for instance we obtain
  3534 (329). Here and in all given examples, both measures are always
   smaller for $\detS(A)$ than for $det(A\times S)$. While this may not always
   be the case, but both approaches yield decent results generally.
   The numbers for the $\detS(A)$ for Q6.6 are marked in gray,
   since its computation took around one hour, so we obtain
   it only when ignoring the timeout. In contrast to $det(A\times
   S)$, however, the computation of $\detS(A)$ 
   did not run out of memory though. 
   The fifth column $\sclS(\det(A))$ contains the schema-based
   cleaning of $\det(A)$. This automaton is equal to $\detS(A)$
   by Correctness Theorem \ref{detS2}. Nevertheless, this
   cell is left blank in all but the smallest case $Q2.1$, since our datalog
   implementation of schema-based cleaning quickly runs out of
   memory for automata with many states.   The time in seconds that for determinization in $det(A\times S)$ and
   $\detS(A)$\INLINE{are given in \Table{qnmt}. It }\Long{are given in
     \Table{qnmt} of the appendix. It } 
   grows in dependence of the size of the output from 0.9 seconds
   until passing over the timeout\Long{ in the gray table cells}\INLINE{ in the gray table cells}.

  In the last two columns for $mini( det(A\times S))$ and $mini( detS(A))$
  we report the sizes of the minimization of $det(A\times S))$ and
  $\detS(A)$. It turns out that $mini(\detS(A))$
  is always smaller than $mini(det(A\times S))$, if both
  can be computed successfully. An example of  $mini(\detS(Q3.4))$ is shown in \Figure{msdet34}.

    \newcommand\prodS{A \times S}
  \begin{figure}[t]
  \centering
  \tiny
    \begin{tabular}{|l||*{7}{c|}}
      \hline
       & $A$& $det(A)$& $det(\prodS)$& $\detS(A)$&  $\sclS($& $mini(det$& $mini($\\
       &&&&&$det(A))$&$(\prodS))$&$\detS(A))$
     \\\hline\hline
      Q2.1&166 (67)&1380 (101)&540 (92)&284 (53)&284 (53)&160 (43)&73 (20) \\\hline\hline
      Q2.2&199 (79)&3635 (214)&1488 (167)&830 (106)&&162 (43)&75 (20) \\\hline\hline
      Q2.3&232 (91)&9574 (471)&4174 (334)&2424 (227)&&164 (43)&77 (20) \\\hline\hline
      Q2.4&265 (103)&24813 (1052)&11502 (713)&6826 (504)&&166 (43)&79 (20) \\\hline\hline
\INLINE{   Q3.1&203 (81)&3282 (204)&625 (104)&351 (64)&&162 (43)&75 (20) \\\hline\hline
      Q3.2&243 (95)&8660 (447)&1716 (191)&1025 (129)&&164 (43)&77 (20) \\\hline\hline
      Q3.3&283 (109)&22516 (996)&4793 (386)&2979 (278)&&166 (43)&79 (20) \\\hline\hline
      Q3.4&323 (123)&57328 (2225)&13148 (829)&8341 (619)&&168 (43)&81 (20) \\\hline\hline}
      Q4.1&240 (95)&8020 (435)&710 (116)&418 (75)&&164 (43)&77 (20) \\\hline\hline
      Q4.2&287 (111)&20945 (968)&1944 (215)&1220 (152)&&166 (43)&79 (20) \\\hline\hline
      Q4.3&334 (127)&53550 (2161)&5412 (438)&3534 (329)&&168 (43)&81 (20) \\\hline\hline
      Q4.4&381 (143)&&14794 (945)&9856 (734)&&170 (43)&83 (20) \\\hline\hline
\INLINE{      Q5.1&277 (109)&19722 (954)&795 (128)&485 (86)&&166 (43)&79 (20) \\\hline\hline
      Q5.2&331 (127)&50666 (2129)&2172 (239)&1415 (175)&&168 (43)&81 (20) \\\hline\hline
      Q5.3&385 (145)&&6031 (490)&4089 (380)&&170 (43)&83 (20) \\\hline\hline
      Q5.4&439 (163)&&16440 (1061)&11371 (849)&&&85 (20) \\\hline\hline}
      Q6.1&314 (123)&48212 (2113)&880 (140)&552 (97)&&168 (43)&81 (20) \\\hline\hline
      Q6.2&375 (143)&&2400 (263)&1610 (198)&&170 (43)&83 (20) \\\hline\hline
      Q6.3&436 (163)&&6650 (542)&4644 (431)&&172 (43)&85 (20) \\\hline\hline
      Q6.4&497 (183)&&18086 (1177)&12886 (964)&&&87 (20) \\\hline\hline
      Q6.5&558 (203)&&&34376 (2169)&&& \\\hline\hline
      Q6.6&619 (223)&&&{\color{gray}88666 (4862)}&&& \\\hline\hline 
	\end{tabular}
\caption{\label{qnm}Statistics of automata for \XPath queries: size(\#states)}
\end{figure}

 \begin{toappendix}
\begin{figure}[t]
\centering
\begin{tabular}{|l||*{5}{c|}}
\hline
&$det(\prodS)$&$time(det(\prodS))$&$\detS(A)$&$time(\detS(A))$\\\hline\hline
Q2.1&540 (92)&1.4&284 (53)&0.9\\\hline\hline
Q2.2&1488 (167)&3.1&830 (106)&1.5\\\hline\hline
Q2.3&4174 (334)&11.1&2424 (227)&3.9\\\hline\hline
Q2.4&11502 (713)&54.5&6826 (504)&16.9\\\hline\hline
Q3.1&625 (104)&1.6&351 (64)&1.1\\\hline\hline
Q3.2&1716 (191)&4.1&1025 (129)&1.9\\\hline\hline
Q3.3&4793 (386)&16.9&2979 (278)&6.2\\\hline\hline
Q3.4&13148 (829)&84.9&8341 (619)&30.6\\\hline\hline
Q4.1&710 (116)&1.9&418 (75)&1.3\\\hline\hline
Q4.2&1944 (215)&5.5&1220 (152)&2.7\\\hline\hline
Q4.3&5412 (438)&22.9&3534 (329)&9.3\\\hline\hline
Q4.4&14794 (945)&120&9856 (734)&46.5\\\hline\hline
Q5.1&795 (128)&2.2&485 (86)&1.5\\\hline\hline
Q5.2&2172 (239)&7.1&1415 (175)&3.3\\\hline\hline
Q5.3&6031 (490)&32.2&4089 (380)&13\\\hline\hline
Q5.4&16440 (1061)&174.6&11371 (849)&71.7\\\hline\hline
Q6.1&880 (140)&2.8&552 (97)&1.6\\\hline\hline
Q6.2&2400 (263)&9.2&1610 (198)&4.6\\\hline\hline
Q6.3&6650 (542)&44.33&4644 (431)&19\\\hline\hline
Q6.4&18086 (1177)&231.8&12886 (964)&101.4\\\hline\hline
Q6.5&&&34376 (2169)&569.7\\\hline\hline
Q6.6&&&{\color{gray}88666 (4862)}&3284.5\\\hline\hline
	\end{tabular}
        \caption{\label{qnmt}Timings in seconds for the determinization of the
          schema product and for schema-based determinization.}
\end{figure}
\end{toappendix}

%\begin{toappendix}
\begin{figure}[tb]
\centering
  \includegraphics[scale=0.25]{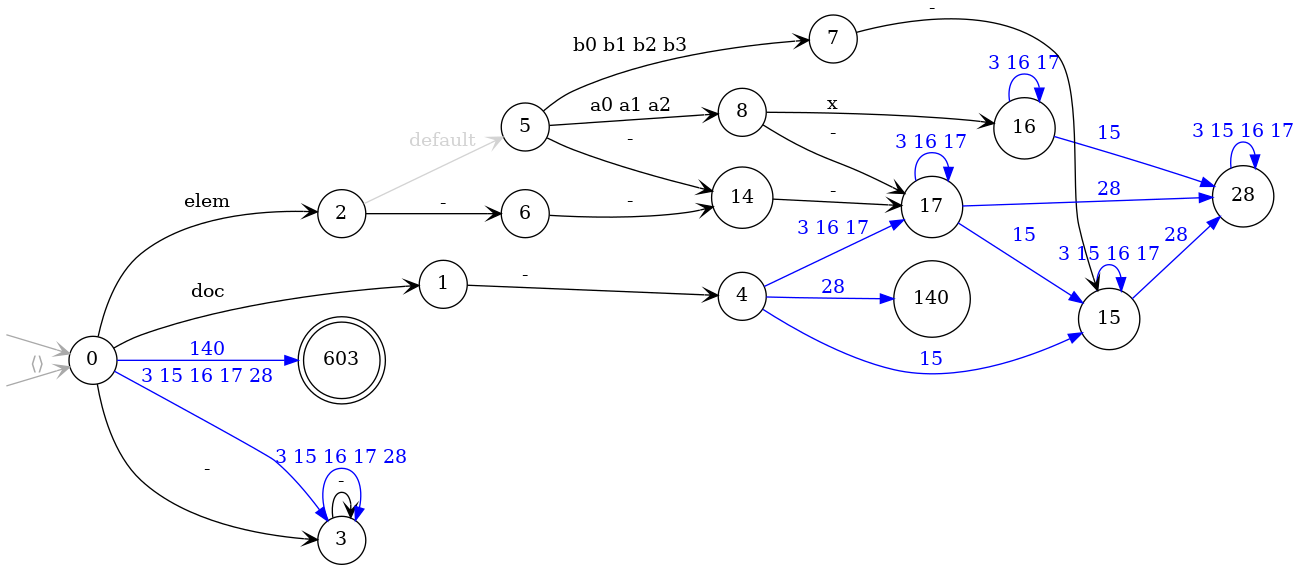}\vspace{-.2cm}
  \caption{\label{msdet34}The automaton $mini(\detS(A))$ of the query Q3.4.}
\end{figure}
%\end{toappendix}

\begin{toappendix}
\begin{figure}[tb]
\centering
  \includegraphics[scale=0.30]{Q34V2/mini/schemaprod.png}
  \caption{\label{msprod34}The automaton $mini(det(A \times S))$ of the query Q3.4.}
\end{figure}
\end{toappendix}
   
%  in interest of our my goal to obtain small deterministic automata, we successfully minimized the third and fourth columns using a naive minimization algorithm. The results are shown in the last two columns where it can noticed that the minimized version of the schema-based determinization $\detS(A)$ is smaller than the one the determinized product $det(A \times S)$.

\subsection*{Conclusion and Future Work}
\label{sec:conclusion}
We presented an algorithm for schema-based determinization for
\SHAs and proved that it always produces the same results as
determinization followed by schema-based cleaning. We argued
why schema-based determinization is often way more efficient than
standard determinization, and why it is close in efficiency to the
determinization of the schema-product. The statements are
supported by upper complexity bounds and experimental evidence.
The experimental results of the present paper are enhanced by
follow up work \cite{alserhalibench}. They show that one can indeed
obtain small deterministic automata based on schema-based
determinization of stepwise hedge automata
for all regular \XPath queries in practice. We hope
that these automata are useful in the future for
experiments with query answering.

%\newpage

%\input{bib.bbl}
%\bibliographystyle{\OldLong /splncs04}
\bibliographystyle{eptcs}
\bibliography{\OldLong /core/mostrare,\OldLong /core/new,\OldLong /core/tom}

%\newpage
%\input{\OldLong /else-extension}

\end{document}